\newtheorem{theorem}{Theorem}[section]
\newtheorem{lemma}[theorem]{Lemma}
\newtheorem{proposition}[theorem]{Proposition}
\newtheorem{corollary}[theorem]{Corollary}
\newtheorem{definition}[theorem]{Definition}
\newtheorem{question}[theorem]{Question}
\newcommand{\complex}{\mathbb{C}}
\begin{document}

\title{Maximally entangled correlation sets}

\date{}

%\author{Elie Alhajjar, Travis B. Russell}

\date{June 2018. Revised June 2019.}

%\subjclass[2000]{46L60,81P45}

%%%%%%%%%%%%%%%%%%%%%%%%%%%%%% Comment this material for hjm1 

\author[ ]{Elie Alhajjar}
\author[ ]{Travis B. Russell}

\affil[ ]{Army Cyber Institute, United States Military Academy, West Point, NY 10996}
\affil[ ]{ \textit{elie.alhajjar@westpoint.edu, travis.russell@westpoint.edu}}

%%%%%%%%%%%%%%%%%%%%%%%%%%%%%%%%%%%%%%%%%%%%%%%%%%%%%%%%%%%%%%%%%

%\address{Army Cyber Institute, United States Military Academy, 216 Thayer Hall\\
%West Point, NY 10996,
%U.S.A.\\
%elie.alhajjar@westpoint.edu}

%\author{Travis B. Russell}

%\address{Army Cyber Institute, United States Military Academy, 216 Thayer Hall\\
%West Point, NY 10996,
%U.S.A.\\
%travis.russell@westpoint.edu}

\maketitle % Comment this for hjm!

%\begin{document}

%%%%%%%%%%%%%%%%%%%%%%%%%%%%%% Uncomment for hjm1

%\title{Maximally entangled correlation sets}
%\author{Elie Alhajjar}
%\address{Army Cyber Institute, United States Military Academy, 252 Thayer Hall\\
%West Point, NY 10996, U.S.A.}
%\email{elie.alhajjar@westpoint.edu}
%\author{Travis B. Russell}
%\address{Army Cyber Institute, United States Military Academy, 216 Thayer Hall\\
%West Point, NY 10996, U.S.A.}
%\email{travis.russell@westpoint.edu}

%%%%%%%%%%%%%%%%%%%%%%%%%%%%%%%%%%%%%%%%%%%%%%%%%%%%

\begin{abstract} We study the set of quantum correlations generated by actions on maximally entangled states. We show that such correlations are dense in their own convex hull. As a consequence, we show that these correlations are dense in the set of synchronous quantum correlations. We introduce the concept of corners of correlation sets and show that every local or nonsignalling correlation can be realized as the corner of a synchronous local or nonsignalling correlation. We provide partial results for other correlation sets. \end{abstract}
%\subjclass[2010]{46L60, 81P45}

%\maketitle % Uncomment for hjm1

%%%%%%%%%%%%%%%% Uncomment for hjm class file %%%%%%%%%%%%%%%%%%%%%%

%\title{Maximally entangled correlation sets}
%\author{Alhajjar, E.}
%\address{Army Cyber Institute \\
%    United States Military Academy \\
%	West Point, NY 10996}
%\email{elie.alhajjar@westpoint.edu}
%\author{Russell, T. B.}
%\address{Army Cyber Institute \\
%    United States Military Academy \\
%	West Point, NY 10996}
%\email{travis.russell@westpoint.edu}

%\date{February 4, 2019}

%\begin{abstract} We study the set of quantum correlations generated by actions on maximally entangled states. We show that such correlations are dense in their own convex hull. As a consequence, we show that these correlations are dense in the set of synchronous quantum correlations. We introduce the concept of corners of correlation sets and show that every local or nonsignalling correlation can be realized as the corner of a synchronous local or nonsignalling correlation. We provide partial results for other correlation sets. \end{abstract}

%\keywords{Quantum correlations; Maximally entangled states; Synchronous correlations.}

%\maketitle

%\tableofcontents  % optional

\section{Introduction}

A correlation is  a tuple $(p(i,j|x,y))$ of positive real numbers such that for each choice of $x$ and $y$ one obtains a joint discrete probability distribution $(p(i,j|x,y))_{i,j}$. In this context, the value $p(i,j|x,y)$ represents the probability that two actors, usually named Alice and Bob, obtain outcomes $i$ and $j$ (respectively) given that they performed experiments $x$ and $y$ (respectively). We let $C(n_A,n_B,m_A,m_B)$ denote the set of all correlations, where Alice and Bob may perform $n_A$ or $n_B$ experiments (respectively) and each experiment has $m_A$ or $m_B$ possible outcomes (respectively). Usually we are concerned with subsets denoted by $C_r(n_A,n_B,m_A,m_B)$ arising from different probabilistic models denoted by the variable $r$. For simplicity, we write $C_r(n_A,n_B,m)$ when $m_A = m_B$ and $C_r(n,m)$ when $n_A=n_B$ and $m_A=m_B$, or simply $C_r$ when the numbers of experiments and outcomes are unspecified or clear from context. Of principal interest in this paper is the set of quantum correlations, denoted by $C_q$.

The study of quantum correlations goes back to foundational questions in physics posed by Einstein-Podolsky-Rosen \cite{EPR}. These questions are equivalent to asking whether or not the set of correlations arising from a theory of local hidden variables, denoted by $C_{loc}$, coincides with the set of quantum correlations, denoted by $C_q$. These questions were settled by John Bell who showed \cite{Bell_EPR} that $C_q \neq C_{loc}$. In subsequent decades, Tsirelson began asking similar questions concerning the relationship between $C_q$ and the set of correlations attainable in relativistic quantum theory \cite{Tsirel'son1987}, which we denote by $C_{qc}$. In short, Tsirelson's yet unanswered question (the weak Tsirelson conjecture) asks whether or not $C_q$ is dense in $C_{qc}$.

Over the past few years much progress has been made in understanding the geometry of the quantum correlation sets, although many open problems remain. For example, Slofstra proved  \cite{Slofstra1} that $C_q(184,235,8,2)$ is not closed and hence showed that the quantum correlation sets are not closed in general, a question that had been open for some time. By studying the structure of synchronous quantum correlations, Dykema-Paulsen-Prakash showed  \cite{DPP1} that the quantum correlation sets $C_q(n,m)$ are not closed for $n \geq 5$ and $m \geq 2$. Kim-Paulsen-Schafhauser showed  \cite{MR3776034} that the synchronous quantum correlations coincide with the synchronous quantum spacial correlations. They also gave a positive answer to the synchronous approximation problem of Dykema-Paulsen  \cite{MR3432742}, proving that Connes' embedding conjecture \cite{ConnesConjecture}, a much-studied problem open since the 1970s, is equivalent to showing that the synchronous quantum commuting correlations coincide with the closure of the synchronous quantum correlations. In spite of all these breakthroughs, the geometry of the quantum correlation sets is not fully understood in the literature.

In this paper, we provide a new approach to describing synchronous quantum correlations. It consists of studying the set of quantum correlations which are generated by actions on maximally entangled states. We call such correlations \textbf{maximally entangled correlations}. We show that the set of maximally entangled correlations, though not convex, is closed under rational convex combinations, and hence is dense in its own convex hull. Combined with previously known results, it follows that the maximally entangled synchronous correlations are dense in the set of all synchronous correlations. As a byproduct, we derive another formulation of Connes' embedding conjecture in terms of maximally entangled synchronous correlations.

Maximally entangled correlations have another interesting relationship with synchronous correlations. We define the \textbf{corner} of a synchronous correlation to be a subcorrelation formed by ``forgetting" the synchronized portion of the original correlation. It turns out that every local correlation is the corner of a synchronous local correlation and that every nonsignalling correlation is the corner of a synchronous nonsignalling correlation. This naturally leads to the question of whether or not every correlation set can be realized as the set of corners of the corresponding synchronous correlation set. We provide some partial answers in the cases of the quantum, quantum spacial, and quantum approximate correlations.

The idea of realizing a correlation as the corner of some larger symmetric correlation was explored by Sikora and Varvitsiotis  \cite{MR3612945} where different connections between correlation sets and positive semidefinite programming were explored. In particular, it was shown that there is a correspondence between quantum correlations and corners of certain doubly nonnegative matrices. We refer the reader to Section 3 of their paper  \cite{MR3612945} for more details.

Our paper is organized as follows. In section 2, we review definitions, notations and known results concerning correlation sets. In section 3, we introduce the set of maximally entangled correlations and discuss its main properties. Moreover, we provide a geometric description of the set of synchronous quantum correlations in terms of maximally entangled correlations and explore some consequences. In section 4, we establish the notion of corners and characterize the synchronous ones for various correlation sets. In section 5, we introduce a relaxed notion of maximally entangled correlation and discuss when these correlations can be approximated by maximally entangled correlations.

\section{Preliminaries}

In this section, we review the background material on quantum correlations. We follow conventions from the literature on synchronous correlations  \cite{MR3776034}. For a more thorough review of the foundations of quantum mechanics, we refer the reader to Nielsen and Chuang's textbook  \cite{MR1796805}. For more details concerning operator theory, we refer the reader to Davidson's textbook \cite{MR1402012}.

\subsection{Operator theory}

By a \textbf{Hilbert space}, we mean a complete complex vector space with a sesquilinear inner product. We let $B(H)$ denote the set of bounded linear operators on a Hilbert space $H$. When a Hilbert space $H$ is of finite dimension $d$, we may identify $H$ with the $d$-dimensional Euclidean space $\mathbb{C}^d$ and the set of linear operators on $H$ with the set of complex $d \times d$ matrices denoted by $\mathbb{M}_d$.

By a projection, we mean a linear operator $P$ on a Hilbert space $H$ satisfying $P^2=P$ and $P^*=P$, where $P^*$ denotes the adjoint operator of $P$. A finite set $\{P_k\}_{k=1}^m$ of positive operators on a Hilbert space $H$ is called a \textbf{positive operator-valued measure} if $\sum_{k=1}^m P_k = I_H$, where $I_H$ denotes the identity operator on $H$. If each $P_k$ is a projection, then $\{P_k\}_{k=1}^m$ is called a \textbf{projection-valued measure}.

Given two Hilbert spaces $H$ and $K$, we let $H \otimes K$ denote their Hilbert space tensor product. We will make use of the \textbf{Schmidt decomposition} (for example, Theorem A.5 of \cite{CLP2017}) of a vector $\phi \in H \otimes K$. In the case when $H$ and $K$ are finite-dimensional, any vector $\phi \in H \otimes K$ admits a Schmidt decomposition of the form $\phi = \sum_{k=1}^N \alpha_k e_k \otimes f_k$ for some $N$ and positive scalars $\{\alpha_k\}$. Here, $\{e_k\}$ and $\{f_k\}$ are orthonormal sets in $H$ and $K$ respectively.

Finally, we briefly introduce $C^*$-algebras and their states. For our purposes, a \textbf{$C^*$-algebra} is a unital closed subalgebra of $B(H)$ which is also closed under the adjoint operation. A \textbf{state on a $C^*$-algebra} $\mathfrak{A}$ is a linear map $\phi:\mathfrak{A} \rightarrow \complex$ which maps the unit of $\mathfrak{A}$ to 1 and maps positive operators to positive real numbers. A state $\phi$ on $\mathfrak{A}$ is \textbf{tracial} if $\phi(ab)=\phi(ba)$ for all $a,b \in \mathfrak{A}$. Since finite-dimensional $C^*$-algebras play an important role in this paper, we should mention that every finite-dimensional $C^*$-algebra is $*$-isomorphic to a finite direct sum of matrix algebras $\oplus_{k=1}^N \mathbb{M}_{d_k}$ (See Theorem III.1.1 of \cite{MR1402012}).

\subsection{Quantum mechanics}

The axioms of quantum mechanics dictate that a physical system corresponds to a Hilbert space $H$ and the state of a physical system corresponds to a unit vector in $H$. For this reason we use the terms state and unit vector interchangeably. A measurement on a physical system is given by a projection-valued measure $\{P_k\}_{k=1}^m$. The projections $P_1, P_2, \dots, P_m$ specify the possible outcomes of the measurement. When a physical system is in state $\phi$, the probability of observing outcome $k$ is given by $\langle P_k \phi, \phi \rangle$.

Given two physical systems $A$ and $B$ with corresponding Hilbert spaces $H_A$ and $H_B$, the state of the joint system is given by a unit vector $\phi \in H_A \otimes H_B$. When the two physical systems are non-interacting, the state of the joint physical system takes the form of a product state $\phi_A \otimes \phi_B$ for some unit vectors $\phi_A \in H_A$ and $\phi_B \in H_B$. Otherwise, the state takes the more general form described by the Schmidt decomposition and is considered to be entangled. In this case, local measurements on the separate physical systems are given by projection-valued measures of the form $\{P_k \otimes I_{H_B}\}$ and $\{I_{H_A} \otimes P_k\}$ respectively.

\subsection{Correlation sets}

Suppose two non-interacting players, Alice and Bob, each has a finite set of experiments with a finite number of outcomes. We let $p(i,j|x,y)$ represent the conditional probability that Alice performs experiment $x$ and gets outcome $i$ while Bob performs experiment $y$ and gets outcome $j$. The resulting tuple $p=\{p(i,j|x,y)\}$ is called a \textbf{correlation} if $p(i,j|x,y) \geq 0$ for all $i,j,x$ and $y$ and $\sum_{i,j} p(i,j|x,y) = 1$ for all $x$ and $y$. We let $n_A$ (resp. $n_B$) denote the number of Alice's (resp. Bob's) experiments and we let $m$ denote the number of possible outcomes per experiment. We let $C(n_A,n_B,m)$ denote the set of correlations for a given tuple $(n_A,n_B,m)$, and we write $C(n,m)$ for $C(n,n,m)$. Whenever the tuple $(n_A,n_B,m)$ is not specified or clear from context we simply write $C$ for a correlation set.

We will consider several particular correlation sets. The largest of these is the set of nonsignalling correlations, denoted by $C_{ns}$. A correlation $p$ is \textbf{nonsignalling} if the marginal densities defined by \[ p_A(i|x) := \sum_j p(i,j|x,y), \qquad p_B(j|y) := \sum_i p(i,j|x,y) \] are well defined, meaning that $\sum_j p(i,j|x,y)$ is independent of the choice of $y$ and $\sum_i p(i,j|x,y)$ is independent of the choice of $x$. The smallest of the correlation sets we will consider is the set of \textbf{local correlations} (or classical correlations), denoted by $C_{loc}$, which is defined to be the closed convex hull of the set of deterministic distributions $\{ p(i,j|x,y) : p(i,j|x,y) \in \{0,1\} \text{ for all } i,j,x,y\}$.

%product distributions $\{ p(i,j|x,y) := p_1(i|x)p_2(j|y) \}$.

Between the local and nonsignalling correlation sets lies a variety of correlation sets whose definitions are inspired by problems in quantum mechanics. A correlation $p$ is called a \textbf{quantum correlation} if there exist finite-dimensional Hilbert spaces $H_A$ and $H_B$, projection-valued measures $\{E_{x,i}\}_{i=1}^m \subset B(H_A)$ and $\{F_{y,j}\}_{j=1}^m \subset B(H_B)$ for each $x \leq n_A$, $y \leq n_B$, and a unit vector $\phi \in H_A \otimes H_B$ such that \[ p(i,j|x,y) = \langle E_{x,i} \otimes F_{y,j} \phi, \phi \rangle. \] We refer to the tuple $(H_A,H_B,\{E_{x,i}\},\{F_{y,j}\},\phi)$ as a \textbf{representation} of $p$. If we relax the requirement that $H_A$ and $H_B$ be finite-dimensional, then we obtain the set of \textbf{quantum spacial correlations}. The closure of the set of quantum correlations is called the set of \textbf{quantum approximate correlations}. We denote by $C_q$ (resp. $C_{qs}$, $C_{qa}$) the set of quantum (resp. quantum spacial, quantum approximate). correlations.

Finally, we define the set of \textbf{quantum commuting correlations}, denoted by $C_{qc}$. A correlation $p$ is in $C_{qc}$ if there exists a Hilbert space $H$, projection-valued measures $\{E_{x,i}\}_{i=1}^m \subset B(H)$ and $\{F_{y,j}\}_{j=1}^m \subset B(H)$ satisfying $E_{x,i}F_{y,j} = F_{y,j}E_{x,i}$ for all $i,j,x$, and $y$, and a unit vector $\phi \in H$ such that \[ p(i,j|x,y) = \langle E_{x,i} F_{y,j} \phi, \phi \rangle. \]

A correlation is called \textbf{synchronous} if $n_A = n_B=n$ and if for each $x \leq n$ and $i \neq j$, we have $p(i,j|x,x)=0$. The set of synchronous correlations is distinguished from other correlation sets $C_r$ with the notation $C_r^s$.

\subsection{Known results}

It is well-known that the correlation sets satisfy the relations \[ C_{loc} \subseteq C_q \subseteq C_{qs} \subseteq C_{qa} \subseteq C_{qc} \subseteq C_{ns} \] and that they are all convex sets. It is also well-known that for certain choices of $n_A, n_B$, and $m$, we have $C_{loc} \neq C_q$ and $C_{qc} \neq C_{ns}$ (for example, see equation (2) and subsequent comments in \cite{DPP1}). Recently, Slofstra   \cite{Slofstra1} showed that $C_{qs} \neq C_{qa}$ in general, settling the so-called strong Tsirelson conjecture (see Remark 2.6 of \cite{MR3460238}). It is worth mentioning that at the time the current paper was being drafted, a preprint by Coladangelo and Stark   \cite{CqNeqCqs} found an example for the separation of the quantum and quantum spacial correlations, hence settling  $C_q \neq C_{qs}$. The only remaining inclusion is of particular importance, since the equality $C_{qa} = C_{qc}$ is known to be equivalent to the celebrated Connes' embedding conjecture (see \cite{MR3067294}, \cite{JMPPSW2011}, and \cite{Fritz2012}).

As in the non-synchronous case, the synchronous correlation sets satisfy \[ C_{loc}^s \subseteq C_q^s \subseteq C_{qs}^s \subseteq C_{qa}^s \subseteq C_{qc}^s \subseteq C_{ns}^s. \] Again, it is well-known that for certain choices of $n_A, n_B$ and $m$, we have $C_{loc}^s \neq C_q^s$ and $C_{qc}^s \neq C_{ns}^s$ (see for example equation (3) and subsequent comments in \cite{DPP1}). Dykema-Paulsen-Prakash showed  \cite{DPP1} that $C_{qs}^s \neq C_{qa}^s$ in general. In another recent paper  \cite{MR3776034}, Kim-Paulsen-Schafhauser showed that $C_q^s = C_{qs}^s$ and that $\overline{C_q^s} = C_{qa}^s$, settling a question posed by Dykema-Paulsen  \cite{MR3432742}. In the same paper, the authors also showed that $C_{qa}^s = C_{qc}^s$ is equivalent to Connes' embedding conjecture.

We will make extensive use of the following characterization of $C_q^s$.

\begin{theorem}[Paulsen, et. al.  \cite{MR3460238}, Theorem 5.5 / Corollary 5.6] \label{WinterThm} Let $p \in C_{q}(n,m)$. Then $p$ is a synchronous correlation if and only if there exist a finite-dimensional $C^*$-algebra $\mathfrak{A}$, projection-valued measures $\{E_{1,i}\},\dots,\{E_{n,i}\}$ in $\mathfrak{A}$ and a tracial state $\tau:\mathfrak{A} \rightarrow \mathbb{C}$ such that $p(i,j|x,y) = \tau(E_{x,i}E_{y,j})$. \end{theorem}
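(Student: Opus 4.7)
My plan is to prove the two implications separately: the converse by an explicit GNS-type construction, and the forward direction by exhibiting a tracial vector state coming from the reduced density matrix of $\phi$.

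$(\Leftarrow)$ Given $(\mathfrak{A},\{E_{x,i}\},\tau)$, synchronicity is automatic: for $i\ne j$ the PVM orthogonality relation $E_{x,i}E_{x,j}=0$ forces $\tau(E_{x,i}E_{x,j})=0$. To see $p\in C_q$, invoke the Wedderburn decomposition $\mathfrak{A}\cong\bigoplus_k\mathbb{M}_{d_k}$, under which $\tau=\sum_k\lambda_k\operatorname{tr}_{d_k}$ is a convex combination of normalized traces with $\lambda_k\ge 0$ and $\sum_k\lambda_k=1$. Set $H_A=H_B=\bigoplus_k\mathbb{C}^{d_k}$, let $E_{x,i}$ act on $H_A$ by its block-diagonal embedding, define $F_{y,j}$ to be the block-wise transpose of $E_{y,j}$, and take
\[ \phi = \sum_k\sqrt{\lambda_k}\,\phi_k \in H_A\otimes H_B, \]
where $\phi_k\in\mathbb{C}^{d_k}\otimes\mathbb{C}^{d_k}$ is the normalized maximally entangled state on the $k$-th diagonal block. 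The standard identities $\langle(c\otimes I)\phi_k,\phi_k\rangle=\operatorname{tr}_{d_k}(c)$ and $(c\otimes I)\phi_k=(I\otimes c^T)\phi_k$, combined with the mutual orthogonality of the $\phi_k$ in $H_A\otimes H_B$, force $\langle(E_{x,i}\otimes F_{y,j})\phi,\phi\rangle=\sum_k\lambda_k\operatorname{tr}_{d_k}(E_{x,i}^{(k)}E_{y,j}^{(k)})=\tau(E_{x,i}E_{y,j})$.

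$(\Rightarrow)$ Fix a representation $(H_A,H_B,\{E_{x,i}\},\{F_{y,j}\},\phi)$ of $p$. The family $\{E_{x,i}\otimes F_{x,j}\}_{i,j}$ is a PVM on $H_A\otimes H_B$; synchronicity forces its off-diagonal members to have zero expectation in $\phi$, and orthogonality of PVM ranges upgrades this to pointwise annihilation $(E_{x,i}\otimes F_{x,j})\phi=0$ for $i\ne j$. Summing over $j\ne i$ and using $\sum_j F_{x,j}=I$ yields the key identity
\[ (E_{x,i}\otimes I_{H_B})\phi = (I_{H_A}\otimes F_{x,i})\phi\qquad\text{for all } x,i. \]
Let $\mathfrak{A}\subseteq B(H_A)$ be the (finite-dimensional) $C^*$-algebra generated by $\{E_{x,i}\}$, and define the vector state $\tau(a):=\langle(a\otimes I)\phi,\phi\rangle$. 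Two applications of the key identity give $(E_{x,i}E_{y,j}\otimes I)\phi = (E_{x,i}\otimes I)(I\otimes F_{y,j})\phi = (E_{x,i}\otimes F_{y,j})\phi$, whence $\tau(E_{x,i}E_{y,j})=p(i,j|x,y)$, as required.

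The main obstacle is traciality of $\tau$, which I would obtain via a density-matrix reformulation. Identify $\phi\in H_A\otimes H_B$ with a Hilbert--Schmidt operator $V\colon H_B\to H_A$ by vectorization, so that $(A\otimes B)\phi$ corresponds to $AVB^T$ (transpose taken in a fixed orthonormal basis of $H_B$). Under this dictionary the key identity becomes $E_{x,i}V=VF_{x,i}^T$. Because $F_{x,i}$ is a projection, $F_{x,i}^T$ is self-adjoint, and taking adjoints yields $V^*E_{x,i}=F_{x,i}^TV^*$. Combining these,
\[ E_{x,i}\rho_A = E_{x,i}VV^* = VF_{x,i}^TV^* = VV^*E_{x,i} = \rho_A E_{x,i}, \]
where $\rho_A:=VV^*$ is the reduced density matrix on $H_A$. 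Thus $\rho_A$ commutes with every $E_{x,i}$, hence with all of $\mathfrak{A}$. Since $\tau(a)=\operatorname{tr}(\rho_A a)$, cyclicity of the trace combined with $[\mathfrak{A},\rho_A]=0$ gives $\tau(ab)=\operatorname{tr}(\rho_A ab)=\operatorname{tr}(a\rho_A b)=\operatorname{tr}(\rho_A ba)=\tau(ba)$ for all $a,b\in\mathfrak{A}$, establishing traciality and completing the proof.
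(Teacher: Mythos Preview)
The paper does not supply its own proof of this statement; it is quoted as Theorem~5.5/Corollary~5.6 of \cite{MR3460238} and used as a black box. Your argument is correct and is essentially the original proof from that reference: the key identity $(E_{x,i}\otimes I)\phi=(I\otimes F_{x,i})\phi$, the reduced-density-matrix computation showing $[\rho_A,E_{x,i}]=0$ and hence traciality, and the GNS/Wedderburn construction for the converse are exactly the ingredients used by Paulsen et al.
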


\section{Maximally entangled correlations}

Let $H_A$ and $H_B$ be finite-dimensional Hilbert spaces with $dim(H_A)=dim(H_B)=d$. Recall that a vector $\phi \in H_A \otimes H_B$ is called \textbf{maximally entangled} if \[ \phi = \frac{1}{\sqrt{d}} \sum_{k=1}^d u_k \otimes v_k \] for some orthogonal bases $\{u_k\}$ and $\{v_k\}$ of $H_A$ and $H_B$, respectively.

\begin{definition} \label{CmaxDefn} \emph{A quantum correlation $p$ is called \textbf{maximally entangled} if $p$ admits a representation $(H_A,H_B,\{E_{x,i}\},\{F_{y,j}\},\phi)$ where $\phi$ is a maximally entangled state. We denote by $C_{max}$ (resp. $C_{max}^s$) the set of maximally entangled correlations (resp. synchronous correlations). We write $C_{max,d}$ and $C_{max,d}^s$ when we want to emphasize that $dim(H_A)=dim(H_B)=d$.}\end{definition}

Already for dimension $d=1$ we begin to observe some interesting properties of the set $C_{max,d}$.

\begin{lemma} \label{lemmaCLoc} For all integers $n_A,n_B, n$ and $m$, we have $ext(C_{loc}(n_A,n_B,m)) = C_{max,1}$ and $ext(C_{loc}^s(n,m)) = C_{max,1}^s(n,m)$. \end{lemma}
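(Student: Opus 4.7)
The plan is to identify $C_{max,1}$ explicitly and then invoke a standard polytope argument. When $d=1$, both $H_A$ and $H_B$ equal $\complex$, the tensor product $H_A \otimes H_B$ is one-dimensional, and up to a phase the unique maximally entangled unit vector is $\phi = 1 \otimes 1$. Any PVM $\{E_{x,i}\}_{i=1}^m \subset B(\complex)$ consists of scalars in $\{0,1\}$ summing to $1$, so there is a unique outcome $f_A(x)$ with $E_{x,f_A(x)}=1$ and similarly $f_B(y)$ for Bob. Therefore
\[ p(i,j|x,y) = \langle E_{x,i}\otimes F_{y,j}\phi,\phi\rangle = E_{x,i} F_{y,j} \in \{0,1\}, \]
and $p$ is exactly the deterministic correlation determined by the pair $(f_A,f_B)$. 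Conversely, every deterministic correlation arises in this way. Hence $C_{max,1}$ coincides with the finite set of deterministic correlations.

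Next, I would recall that $C_{loc}(n_A,n_B,m)$ is by definition the convex hull of this finite set, so it is a polytope. Its extreme points are contained in the deterministic correlations, and each deterministic correlation $p$ is in fact extreme: if $p = \lambda p_1 + (1-\lambda)p_2$ with $\lambda\in(0,1)$ and $p_1,p_2 \in C_{loc}$, then any coordinate where $p(i,j|x,y)=0$ forces $p_1(i,j|x,y)=p_2(i,j|x,y)=0$ by nonnegativity, and since the $p(\cdot,\cdot|x,y)$ rows sum to $1$, the $0$-pattern of $p$ pins down $p_1$ and $p_2$ as the same deterministic distribution. This gives $\mathrm{ext}(C_{loc}) = C_{max,1}$.

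For the synchronous statement, I would first observe that $C_{max,1}^s(n,m)$ consists of those deterministic correlations for which $f_A = f_B =: f$; indeed, synchrony $p(i,j|x,x)=0$ for $i\neq j$ forces $f_A(x)=f_B(x)$. Every such correlation remains extreme in $C_{loc}^s \subseteq C_{loc}$, so $C_{max,1}^s \subseteq \mathrm{ext}(C_{loc}^s)$. For the reverse inclusion, take $p \in C_{loc}^s$ and write $p = \sum_k \lambda_k p_k$ with $\lambda_k > 0$ and each $p_k$ deterministic. For every $x$ and $i\neq j$,
\[ 0 = p(i,j|x,x) = \sum_k \lambda_k p_k(i,j|x,x), \]
and nonnegativity forces $p_k(i,j|x,x)=0$ for all $k$. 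Hence every $p_k$ appearing with positive weight is itself synchronous, i.e.\ lies in $C_{max,1}^s$. Thus any extreme point of $C_{loc}^s$ must equal one of the $p_k$'s and so belongs to $C_{max,1}^s$, completing the proof.

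There is no real obstacle; the only conceptually interesting step is recognizing that the $d=1$ restriction collapses PVMs to indicator functions and identifies $C_{max,1}$ with the vertex set of the polytope $C_{loc}$. The synchronous case then follows because synchrony is a system of linear equations with $\{0,1\}$-valued right-hand sides, so it cuts a face of the polytope whose vertices are precisely the synchronous deterministic correlations.
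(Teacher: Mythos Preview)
Your proof is correct and follows the same approach as the paper: identify $C_{max,1}$ with the set of product deterministic correlations (since one-dimensional PVMs are indicator functions) and use that $C_{loc}$ is by definition their convex hull. The paper's own proof is a two-line sketch, whereas you supply the missing details --- in particular the verification that every deterministic correlation is genuinely extreme and the explicit face argument for the synchronous case --- so your write-up is a fleshed-out version of the same argument.
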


\begin{proof} By definition, $C_{loc}$ is the convex hull of the deterministic correlations, where $p(i,j|x,y) \in \{0,1\}$. It is easily verified that every such correlation is of the form $p(i,j|x,y) = \delta_{x,i} \gamma_{y,i}$ where $\delta_{x,i}, \gamma_{y,j} \in \{0,1\}$ and $\sum_i \delta_{x,i} = \sum_j \gamma_{y,j} = 1$ for all $x$ and $y$. The synchronous case is similar. \end{proof}

%It is worth noting the special case when $d=1$. Each $p \in C_{max,1}$ admits a representation on the one-dimensional Hilbert space where the only projections are 0 and 1. These correlations coincide with the set $ext(C_{loc})$ of extreme local correlations. Thus, $C_{max,1}= ext(C_{loc})$.

The following lemma is well-known (for example, see the proof of proposition 3.4 in \cite{Slofstra1}). We will use it repeatedly throughout the paper - a proof is included for completeness.

\begin{lemma} \label{canonicalLemma} Let $p \in C_q$. Then $p \in C_{max,d}$ if and only if there exist projection-valued measures $\{E_{x,i}\}$ and $\{F_{y,j}\}$ in $\mathbb{M}_d$ for each $x$ and $y$ such that \[ p(i,j|x,y) = \frac{1}{d} Tr(E_{x,i}F_{y,j})\] for all $i,j,x,y$, where $Tr$ is the usual trace on $\mathbb{M}_d$.  Moreover, $p \in C_{max,d}^s$ if and only if the preceding statement holds for $F_{y,j}=E_{y,j}$ for every $y$ and $j$.\end{lemma}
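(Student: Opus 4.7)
The plan is to reduce everything to a direct calculation involving the standard maximally entangled vector $\phi_0 = \frac{1}{\sqrt{d}}\sum_{k=1}^d e_k \otimes e_k$ in $\mathbb{C}^d \otimes \mathbb{C}^d$, for which a short computation in matrix entries gives the identity
\[\langle (A \otimes B)\phi_0, \phi_0\rangle = \frac{1}{d}\operatorname{Tr}(AB^T) \qquad (A,B \in \mathbb{M}_d).\]
Every statement in the lemma will follow by bringing an arbitrary maximally entangled representation into contact with $\phi_0$ and applying this identity, possibly after absorbing a transpose into the PVM on Bob's side (transposition sends projections to projections and sums to sums, so it preserves the PVM property).

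For the forward direction of the first assertion, I would start with a representation $(H_A,H_B,\{E_{x,i}\},\{F_{y,j}\},\phi)$ with $\dim H_A = \dim H_B = d$ and $\phi = \frac{1}{\sqrt{d}}\sum_k u_k \otimes v_k$. Pick unitaries $U:\mathbb{C}^d \to H_A$ and $V:\mathbb{C}^d \to H_B$ sending the standard basis to $\{u_k\}$ and $\{v_k\}$, so $\phi = (U \otimes V)\phi_0$, and compute
\[p(i,j|x,y) = \langle (U^*E_{x,i}U \otimes V^*F_{y,j}V)\phi_0, \phi_0\rangle = \tfrac{1}{d}\operatorname{Tr}\bigl((U^*E_{x,i}U)(V^*F_{y,j}V)^T\bigr).\]
Redefining $\tilde E_{x,i} := U^*E_{x,i}U$ and $\tilde F_{y,j} := (V^*F_{y,j}V)^T$ gives PVMs in $\mathbb{M}_d$ with the required trace formula. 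Conversely, given PVMs $\{E_{x,i}\},\{F_{y,j}\}$ in $\mathbb{M}_d$ satisfying $p(i,j|x,y) = \frac{1}{d}\operatorname{Tr}(E_{x,i}F_{y,j})$, the representation $(\mathbb{C}^d,\mathbb{C}^d,\{E_{x,i}\},\{F_{y,j}^T\},\phi_0)$ is maximally entangled and, by the identity, reproduces $p$.

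For the synchronous assertion, the \emph{if} direction is immediate since $E_{x,i}E_{x,j} = 0$ whenever $i \ne j$ in a PVM. For the \emph{only if} direction, I would start from the trace formula produced above for a synchronous $p$; synchrony gives $\operatorname{Tr}(E_{x,i}F_{x,j}) = 0$ for $i \ne j$, and since $E_{x,i},F_{x,j}$ are projections, $\operatorname{Tr}(E_{x,i}F_{x,j}) = \|E_{x,i}F_{x,j}\|_2^2$, forcing $E_{x,i}F_{x,j} = 0$. The partition of unity then gives
\[E_{x,i} = E_{x,i}\sum_j F_{x,j} = E_{x,i}F_{x,i} = \sum_j E_{x,j}F_{x,i} = F_{x,i},\]
so the two families coincide and we may write $F_{y,j} = E_{y,j}$ throughout. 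The main obstacle is essentially bookkeeping: one must not lose track of the transpose introduced in the basis change, and the final step --- although a standard partition-of-unity argument --- is the only place where a genuine algebraic identity (rather than a change of coordinates) is invoked.
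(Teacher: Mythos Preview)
Your proof is correct and follows essentially the same route as the paper: reduce to the canonical maximally entangled vector via unitaries, use the identity $\langle (A\otimes B)\phi_0,\phi_0\rangle=\tfrac{1}{d}\operatorname{Tr}(AB^T)$, and absorb the transpose into Bob's PVM. The only noticeable difference is in the synchronous step: the paper deduces $E_{x,i}=F_{x,i}$ from $\operatorname{Tr}(E_{x,i}^2)=\operatorname{Tr}(E_{x,i}F_{x,i})=\operatorname{Tr}(F_{x,i}^2)$ together with the equality case of Cauchy--Schwarz for the Hilbert--Schmidt inner product, whereas your sandwich argument $E_{x,i}=E_{x,i}\sum_j F_{x,j}=E_{x,i}F_{x,i}=\sum_j E_{x,j}F_{x,i}=F_{x,i}$ reaches the same conclusion more directly and without invoking an inequality.
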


%\begin{lemma} \label{canonicalLemma} Let $p \in C_q$. Then $p \in C_{max,d}$ if and only if $p$ admits a representation with projection-valued measures $\{E_{x,i}\}$ and $\{F_{y,j}\}$ in $\mathbb{M}_d$ such that \[ p(i,j|x,y) = \frac{1}{d} Tr(E_{x,i}F_{y,j})\] where $Tr$ is the usual trace on $\mathbb{M}_d$. \end{lemma}

\begin{proof} First assume $p \in C_{max,d}$. Let $(H_A,H_B,\{\tilde{E}_{x,i}\},\{\tilde{F}_{y,j}\},\tilde{\phi})$ be a representation of $p$ with $d$-dimensional Hilbert spaces $H_A$ and $H_B$ and $\tilde{\phi} = \frac{1}{\sqrt{d}} \sum_{k=1}^d u_k \otimes v_k$. Let $\{e_k\} \subset \mathbb{C}^d$ be the canonical orthonormal basis. Then there exist unitary matrices $U,V \in \mathbb{M}_d$ such that $Uu_k=Vv_k=e_k$ for each $k$. Define operators $E_{x,i} := U\tilde{E}_{x,i}U^*$ and $F_{y,j} := V \tilde{F}_{y,j}V^*$ for each $x,y,i$ and $j$, and a vector $\phi := (U \otimes V) \tilde{\phi}$. Clearly, $\{E_{x,i}\}$ and $\{F_{y,j}\}$ are projection-valued measures and $\phi$ is a maximally entangled state with a decomposition \[ \phi = \frac{1}{\sqrt{d}} \sum_{k=1}^d e_k \otimes e_k.\] Now, notice that \begin{eqnarray} p(i,j|x,y) & = & \langle \tilde{E}_{x,i} \otimes \tilde{F}_{y,j} \tilde{\phi}, \tilde{\phi} \rangle \nonumber \\ & = & \langle E_{x,i} \otimes F_{y,j} \phi, \phi \rangle \nonumber \\ & = & \frac{1}{d} Tr(E_{x,i} F_{y,j}^T) \nonumber \end{eqnarray} where $F_{y,j}^T$ is the transpose of $F_{y,j}$ with respect to the basis $\{e_k\}$. The reverse implication that $p(i,j|x,y) = \frac{1}{d} Tr(E_{x,i} F_{y,j})$ implies $p \in C_{max,d}$ can be easily verified by the reader using the final equality in the equation above.

In the synchronous case, observe that $p(i,j|x,x) = 0$ implies that $\frac{1}{d} Tr(E_{x,i} F_{x,j}) = 0$. However $Tr(AB) = 0$ implies $AB = 0$ for positive matrices $A$ and $B$. Hence $E_{x,i} F_{x,j} = 0$ for all $i \neq j$. Since $\sum_{i=1}^m E_{x,i} = \sum_{i=1}^m F_{x,i} = I_d$, \[ Tr(E_{x,i}^2) = Tr(E_{x,i} F_{x,i}) = Tr(F_{x,i}^2). \] Applying the Cauchy-Schwarz inequality to the inner product $\langle A, B \rangle = Tr(AB^*)$ we conclude that $E_{x,i} = F_{x,i}$. \end{proof}

In contrast with the correlation sets $C_r$ for $r \in \{loc,q,qs,qa,qc,ns\}$, the set $C_{max}$ is not convex. This is shown in the following proposition.

\begin{proposition} \label{rationalMarginal} Let $p \in C_{max}$. Then for each $i,j,x,$ and $y$, $p_A(i|x),p_B(j|y) \in \mathbb{Q}$.  Consequently the set $C_{max}$ is not convex. \end{proposition}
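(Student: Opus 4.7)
The plan is to reduce everything to the canonical trace form provided by Lemma \ref{canonicalLemma} and then exploit the fact that traces of projections are integers.

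First I would write $p \in C_{max,d}$ for some $d$ and invoke Lemma \ref{canonicalLemma} to represent $p$ as $p(i,j|x,y) = \frac{1}{d} Tr(E_{x,i}F_{y,j})$ where $\{E_{x,i}\}$ and $\{F_{y,j}\}$ are projection-valued measures in $\mathbb{M}_d$. Then I would compute the marginals directly:
\[ p_A(i|x) = \sum_j \frac{1}{d} Tr(E_{x,i}F_{y,j}) = \frac{1}{d} Tr\!\left( E_{x,i} \sum_j F_{y,j}\right) = \frac{1}{d} Tr(E_{x,i}), \]
using that $\sum_j F_{y,j} = I_d$. An analogous calculation gives $p_B(j|y) = \frac{1}{d} Tr(F_{y,j})$. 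Since each $E_{x,i}$ is a projection on $\mathbb{C}^d$, its trace equals its rank, which is an integer in $\{0,1,\dots,d\}$. Therefore $p_A(i|x), p_B(j|y) \in \{0,1/d,2/d,\dots,1\} \subset \mathbb{Q}$.

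For the non-convexity statement, I would argue by contradiction: suppose $C_{max}$ were convex. Pick two deterministic local correlations $p_1, p_2 \in C_{max,1}$ (which lie in $C_{max}$ by Lemma \ref{lemmaCLoc}) whose marginals differ at some entry, say $p_{1,A}(i|x) = 0$ and $p_{2,A}(i|x) = 1$. For any irrational $\lambda \in (0,1)$, the convex combination $\lambda p_1 + (1-\lambda)p_2$ would have marginal $\lambda \cdot 0 + (1-\lambda) \cdot 1 = 1 - \lambda$, which is irrational. By the first part of the proposition, this combination cannot lie in $C_{max}$, contradicting convexity.

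The main step is really just the marginal computation and recognizing that projection traces are integers; there is no substantive obstacle. The only point requiring a slight touch of care is the non-convexity argument, where one must produce two explicit maximally entangled correlations (easily supplied by the $d=1$ deterministic ones) whose marginals differ so that an irrational convex combination exits $\mathbb{Q}$.
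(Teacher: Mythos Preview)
Your proof is correct and follows essentially the same approach as the paper: both use Lemma~\ref{canonicalLemma} to express the marginals as $\frac{1}{d}Tr(E_{x,i})$, note that this is a rank divided by $d$, and then exhibit an irrational convex combination of two maximally entangled correlations with distinct marginals. Your non-convexity argument is slightly more explicit (you supply deterministic $d=1$ correlations via Lemma~\ref{lemmaCLoc}), but this is just a concrete instantiation of the paper's choice of $p^{(1)},p^{(2)}$.
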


\begin{proof}  By Lemma \ref{canonicalLemma}, there exist projection-valued measures $\{E_{x,i}\},\{F_{y,j}\} \subset \mathbb{M}_d$ for some $d$ such that $p(i,j|x,y) = \frac{1}{d}Tr(E_{x,i}F_{y,j})$. Then for fixed $i$ and $x$, \[ p_A(i|x) = \sum_j p(i,j|x,y) = \frac{1}{d} Tr(E_{x,i}) \in \mathbb{Q} \] since $Tr(E_{x,i})$ is the rank of $E_{x,i}$. A similar calculation shows that $p_B(j|y) \in \mathbb{Q}$ for each fixed $j,y$.

 To see that $C_{max}$ is not convex, choose correlations $p^{(1)},p^{(2)} \in C_{max}$ such that $p_A^{(1)}(i|x) \neq p_A^{(2)}(i|x)$ and choose $r \in (0,1)$ such that $r p_A^{(1)}(i|x) + (1-r) p_A^{(2)}(i|x) \notin \mathbb{Q}$. Then $p^{(3)} := rp^{(1)} + (1-r)p^{(2)}$ is in $C_q$ since $C_q$ is convex, but $p^{(3)}_A(i|x)$ is irrational, proving that $p^{(3)} \notin C_{max}$.  \end{proof}

%\begin{corollary} \label{CmaxNotConvex} The set $C_{max}$ is not convex. \end{corollary}

%\begin{proof} Choose $r \in (0,1) \setminus \mathbb{Q}$ and choose correlations $p_1,p_2 \in C_{max,1}(2,2)$ such that $p_1(1,1|1,2)=p_2(1,2|1,2)=1$ and $p_t(i,j|1,2) = 0$ for $t \in \{0,1\}$ and for all other $i$ and $j$. Define $q := rp_1 + (1-r)p_2$. Then \begin{eqnarray} q_B(1|2) & = & q(1,1|1,2) + q(2,1|1,2) \nonumber \\ & = & r(p_1(1,1|1,2) + p_1(2,1|1,2)) + (1-r)(p_2(1,1|1,2) + p_2(2,1|1,2)) \nonumber \\ & = & r \notin \mathbb{Q}. \nonumber \end{eqnarray} By Proposition \ref{rationalMarginal} we have $rp_1 + (1-r)p_2 \notin C_{max}$. Hence $C_{max}$ is not convex. \end{proof}

Despite not being convex in general, the next theorem shows that the set $C_{max}$ is closed under rational convex combinations.

\begin{theorem} \label{CmaxThm} For any collection $\{p_k\}_{k=1}^N$ of maximally entangled quantum correlations and any collection $\{t_k\}_{k=1}^N$ of positive rational numbers with $\sum_{k=1}^N t_k = 1$, we have \[ \sum_{k=1}^N t_k p_k \in C_{max}. \]  Moreover, if each $p_k$ is synchronous then $\sum_{k=1}^N t_k p_k \in C_{max}^s$. \end{theorem}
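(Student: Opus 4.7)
The plan is to use Lemma \ref{canonicalLemma} to replace each $p_k$ with a trace representation, and then build a single large representation for $\sum_k t_k p_k$ by taking an appropriate block-diagonal direct sum whose dimension is engineered so that the trace normalization $1/D$ distributes as the prescribed weights $t_k$.

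First I would invoke Lemma \ref{canonicalLemma} to write, for each $k$, projection-valued measures $\{E^{(k)}_{x,i}\}$ and $\{F^{(k)}_{y,j}\}$ in $\mathbb{M}_{d_k}$ satisfying $p_k(i,j|x,y) = \frac{1}{d_k}\mathrm{Tr}(E^{(k)}_{x,i}F^{(k)}_{y,j})$. Next I would clear denominators: write $t_k = n_k/N$ with a common positive integer denominator $N$, set $L := \mathrm{lcm}(d_1,\dots,d_N)$, and define $D := NL$ together with multiplicities $m_k := n_k L/d_k$, which are positive integers by the choice of $L$. Note for later use that $\sum_k m_k d_k = L\sum_k n_k = LN = D$.

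The key construction is then on the Hilbert space $H := \bigoplus_{k=1}^N \bigl(\mathbb{C}^{m_k}\otimes \mathbb{C}^{d_k}\bigr)$, which has dimension $D$. Define the block-diagonal operators
\[ E_{x,i} := \bigoplus_{k=1}^N I_{m_k}\otimes E^{(k)}_{x,i}, \qquad F_{y,j} := \bigoplus_{k=1}^N I_{m_k}\otimes F^{(k)}_{y,j}. \]
Each summand is a projection (a tensor product of projections), and summing over $i$ (resp.\ $j$) reduces each block to $I_{m_k}\otimes I_{d_k}$, so $\{E_{x,i}\}$ and $\{F_{y,j}\}$ are projection-valued measures on $H$. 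I would then compute
\[ \tfrac{1}{D}\mathrm{Tr}(E_{x,i}F_{y,j}) = \tfrac{1}{D}\sum_{k=1}^N m_k\,\mathrm{Tr}\bigl(E^{(k)}_{x,i}F^{(k)}_{y,j}\bigr) = \sum_{k=1}^N \tfrac{m_k d_k}{D}\, p_k(i,j|x,y) = \sum_{k=1}^N t_k\, p_k(i,j|x,y), \]
where the last equality uses $m_k d_k/D = n_k L/(NL) = n_k/N = t_k$. Applying the reverse direction of Lemma \ref{canonicalLemma} to $H\cong\mathbb{C}^D$ yields $\sum_k t_k p_k \in C_{max,D}$.

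For the synchronous case, if each $p_k \in C_{max}^s$ then Lemma \ref{canonicalLemma} lets us take $F^{(k)}_{y,j} = E^{(k)}_{y,j}$, so by construction $F_{y,j} = E_{y,j}$ and the same lemma delivers $\sum_k t_k p_k \in C_{max}^s$. I do not expect a real obstacle here beyond the bookkeeping of the dimension count; the one small subtlety is choosing $D$ so that the multiplicities $m_k$ come out to be integers simultaneously, which is what forces the factor $L = \mathrm{lcm}(d_k)$ (as opposed to the naive guess $D=N$). Everything else is routine verification that block-diagonal sums of projection-valued measures are projection-valued measures and that the trace splits additively across the blocks.
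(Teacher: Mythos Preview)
Your proof is correct and follows essentially the same approach as the paper's: both invoke Lemma~\ref{canonicalLemma} and build a block-diagonal direct sum whose total dimension is engineered so that the normalized trace reproduces the rational weights (the paper uses $R=\prod_k d_k$ where you use $L=\mathrm{lcm}_k(d_k)$, a harmless optimization). One minor notational slip: you reuse the letter $N$ for both the number of correlations and the common denominator of the $t_k$; the paper avoids this by calling the latter $M$.
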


\begin{proof} Let $\{p_k\}_{k=1}^N \subset C_{max}$. Suppose that $p_k \in C_{max,d_k}$ for some integer $d_k$, for $k=1,\dots,N$. By Lemma \ref{canonicalLemma}, there exist projection-valued measures $\{E_{x,i}^{(k)}\}, \{F_{y,j}^{(k)}\} \subset \mathbb{M}_{d_k}$ such that each $p_k(i,j|x,y) = \frac{1}{d_k} Tr(E_{x,i}^{(k)}F_{y,j}^{(k)})$ for each $k$ and for all $i,j,x$ and $y$.

Since each $t_k$ is rational, there exist positive integers $M, n_1, \dots, n_N$ such that $t_k = \frac{n_k}{M}$. Define $R := \prod_{k=1}^N d_k$ and for each $k=1,\dots,N$ define $R_k := \frac{R}{d_k}$. Notice that \begin{eqnarray} \sum_{k=1}^N R_k d_k n_k & = & R (\sum_{k=1}^N n_k) \nonumber \\ & = & RM \nonumber \end{eqnarray} where the last line follows because $\sum_{k=1}^N t_k = \sum_{k=1}^N \frac{n_k}{M} = 1$. Consequently we obtain projection-valued measures $\{E_{x,i}\}, \{F_{y,j}\} \subset \mathbb{M}_{RM}$ by defining \[ E_{x,i} := \bigoplus_{k=1}^N \bigoplus_{l=1}^{R_k n_k} E_{x,i}^{(k)}, \qquad F_{y,j} := \bigoplus_{k=1}^N \bigoplus_{l=1}^{R_k n_k} F_{y,j}^{(k)} \] for each $i,j,x,$ and $y$.

Define a correlation $p:= \sum_{k=1}^N t_k p_k$.  For fixed $i,j,x$ and $y$, we have \[ p(i,j|x,y) = \frac{1}{RM} Tr(E_{x,i}F_{y,j}). \] By Lemma \ref{canonicalLemma}, $p \in C_{max,RM}$. The final statement regarding the synchronous case is clear. This completes the proof of the theorem.  \end{proof}

%Define a correlation $p:= \sum_{k=1}^N t_k p_k$. For fixed $i,j,x$ and $y$, we have \begin{eqnarray} p(i,j|x,y) & = & \sum_{k=1}^N \frac{n_k}{M} p_k(i,j|x,y) \nonumber \\ & = & \sum_{k=1}^N \frac{n_k}{M} \frac{1}{d_k} Tr(E_{x,i}^{(k)} F_{y,j}^{(k)}) \nonumber \\ & = & \sum_{k=1}^N \frac{Rn_k}{RM} \frac{1}{d_k} Tr(E_{x,i}^{(k)} F_{y,j}^{(k)}) \nonumber \\ & = & \frac{1}{RM} \sum_{k=1}^N R_k n_k Tr(E_{x,i}^{(k)} F_{y,j}^{(k)}) \nonumber \\ & = & \frac{1}{RM} \sum_{k=1}^N Tr( \bigoplus_{l=1}^{R_k n_k} E_{x,i}^{(k)} F_{y,j}^{(k)}) \nonumber \\ & = & \frac{1}{RM} Tr(\bigoplus_{k=1}^N \bigoplus_{l=1}^{R_k n_k} E_{x,i}^{(k)} F_{y,j}^{(k)}) \nonumber \\ & = & \frac{1}{RM} Tr(E_{x,i}F_{y,j}). \nonumber \end{eqnarray} By Lemma \ref{canonicalLemma}, $p \in C_{max,RM}$. This completes the proof of the theorem. \end{proof}

%Note that $C_{max}^s = C_{max} \cap C_q^s$. Since $C_q^s$ is convex, Theorem \ref{CmaxThm} implies that $C_{max}^s$ is closed under rational convex combinations as well.

By Proposition \ref{rationalMarginal}, the set $C_{max}$ is not convex. We denote its convex hull by $co(C_{max})$. Similarly, we let $co(C_{max}^s)$ denote the convex hull of $C_{max}^s$. The next corollary is an easy consequence of Theorem \ref{CmaxThm}.

\begin{corollary} \label{coCmax} The set $C_{max}$ (resp. $C_{max}^s$) is dense in $co(C_{max})$ (resp. $co(C_{max}^s)$). \end{corollary}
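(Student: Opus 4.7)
The plan is to leverage Theorem \ref{CmaxThm} by showing that any point in $co(C_{max})$ can be approximated by rational convex combinations of maximally entangled correlations, which in turn lie in $C_{max}$ by that theorem.

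First, I would fix an arbitrary $p \in co(C_{max})$ and write it, by definition of convex hull, as a finite convex combination $p = \sum_{k=1}^N t_k p_k$ with each $p_k \in C_{max}$, each $t_k \geq 0$, and $\sum_{k=1}^N t_k = 1$. The goal is to produce a sequence of rational convex combinations of the same $p_k$'s converging to $p$ in the natural Euclidean topology on correlations.

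Next, given $\epsilon > 0$, I would approximate the real coefficients by rational ones while preserving the simplex constraint. Specifically, choose positive rationals $t_1^{(\epsilon)}, \dots, t_{N-1}^{(\epsilon)}$ with $|t_k - t_k^{(\epsilon)}| < \epsilon/N$ for $k < N$, and set $t_N^{(\epsilon)} := 1 - \sum_{k=1}^{N-1} t_k^{(\epsilon)}$, which is rational and (for $\epsilon$ small enough relative to $t_N$, or after discarding indices with $t_k = 0$) positive. Define $q^{(\epsilon)} := \sum_{k=1}^N t_k^{(\epsilon)} p_k$. By Theorem \ref{CmaxThm}, $q^{(\epsilon)} \in C_{max}$, and since $\|q^{(\epsilon)} - p\| \leq \sum_k |t_k - t_k^{(\epsilon)}| \cdot \|p_k\|$ is bounded by a constant multiple of $\epsilon$, letting $\epsilon \to 0$ produces a sequence in $C_{max}$ converging to $p$. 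The synchronous case is identical, since Theorem \ref{CmaxThm} guarantees that rational convex combinations of synchronous maximally entangled correlations remain in $C_{max}^s$.

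There is no real obstacle here; the only minor subtlety is ensuring the rational approximations still form a probability vector with all positive entries, which is handled by first dropping any $p_k$ whose coefficient $t_k$ is zero and then choosing $\epsilon$ smaller than the minimum remaining $t_k$. Everything else is a direct appeal to Theorem \ref{CmaxThm} and continuity of the map $(t_1, \dots, t_N) \mapsto \sum t_k p_k$.
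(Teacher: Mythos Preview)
Your proposal is correct and follows essentially the same approach as the paper: write $p$ as a convex combination, approximate the coefficients by rationals summing to $1$, and invoke Theorem~\ref{CmaxThm}. If anything, you are slightly more explicit than the paper about handling positivity of the approximating coefficients.
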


\begin{proof} Let $p \in co(C_{max})$, and let $\epsilon > 0$ be given. Then $p = \sum_{k=1}^N t_k p_k$ with each $p_k \in C_{max}$ and $\sum_{k=1}^N t_k = 1$, $t_k \geq 0$. Choose positive rational numbers $r_k$ such that $\sum_{k=1}^N r_k = 1$, and $|t_k - r_k| < \frac{\epsilon}{N}$ for each $k$. Then $q := \sum_{k=1}^N r_k p_k \in C_{max}$ by Theorem \ref{CmaxThm}. Moreover, for any $i,j,x$ and $y$,\begin{eqnarray} |p(i,j|x,y) - q(i,j|x,y)| & = & |\sum_{k=1}^N (t_k-r_k) p_k(i,j|x,y)| \nonumber \\ & \leq & \sum_{k=1}^N |t_k - r_k| < \epsilon. \nonumber \end{eqnarray} Finally, notice that if $p \in co(C_{max}^s)$, then we could have chosen $p_k \in C_{max}^s$. In that case, $q \in C_{max}^s$ and the result follows. \end{proof}

In the synchronous scenario, there is a strong relationship between the sets $C_q^s$ and $C_{max}^s$. As the next theorem shows, it turns out that every synchronous quantum correlation can be approximated by a synchronous maximally entangled correlation.

\begin{theorem} \label{SyncQSEqualsCoCmax} The sets $C_{q}^s$ and $co(C_{max}^s)$ coincide. Consequently, $C_{max}^s$ is dense in $C_q^s$, and hence $\overline{C_{max}^s} = C_{qa}^s$. \end{theorem}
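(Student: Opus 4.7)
The plan is to prove the equality $C_q^s = co(C_{max}^s)$ by a double inclusion, and then deduce the density and closure statements from Corollary \ref{coCmax} together with the Kim--Paulsen--Schafhauser result that $\overline{C_q^s} = C_{qa}^s$ mentioned in the Known Results subsection.

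For the inclusion $co(C_{max}^s) \subseteq C_q^s$, I would just note that every maximally entangled correlation is by definition a quantum correlation, so $C_{max}^s \subseteq C_q^s$; since convex combinations of synchronous correlations are synchronous and $C_q$ is convex, $C_q^s$ is itself convex, so it contains $co(C_{max}^s)$.

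The substantive direction is $C_q^s \subseteq co(C_{max}^s)$. Here I would apply Theorem \ref{WinterThm} to obtain, for $p \in C_q^s(n,m)$, a finite-dimensional $C^*$-algebra $\mathfrak{A}$, projection-valued measures $\{E_{x,i}\}$ in $\mathfrak{A}$, and a tracial state $\tau:\mathfrak{A} \to \mathbb{C}$ with $p(i,j|x,y) = \tau(E_{x,i}E_{y,j})$. By the structure theorem for finite-dimensional $C^*$-algebras cited in Section 2.1, we may identify $\mathfrak{A}$ with $\bigoplus_{k=1}^{N} \mathbb{M}_{d_k}$ and write each $E_{x,i} = \bigoplus_{k=1}^{N} E_{x,i}^{(k)}$, where $\{E_{x,i}^{(k)}\}_{i=1}^m$ is a projection-valued measure in $\mathbb{M}_{d_k}$ for each fixed $x$ and $k$. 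Every tracial state on such a direct sum is a convex combination of the normalized traces on each summand, so there exist nonnegative scalars $\lambda_1,\dots,\lambda_N$ with $\sum_k \lambda_k = 1$ such that
\[
\tau(E_{x,i}E_{y,j}) \;=\; \sum_{k=1}^{N} \lambda_k \cdot \frac{1}{d_k} Tr\bigl(E_{x,i}^{(k)} E_{y,j}^{(k)}\bigr).
\]
Defining $p_k(i,j|x,y) := \frac{1}{d_k} Tr(E_{x,i}^{(k)} E_{y,j}^{(k)})$, Lemma \ref{canonicalLemma} identifies each $p_k$ as an element of $C_{max,d_k}^s$, and the display exhibits $p = \sum_k \lambda_k p_k$ as an element of $co(C_{max}^s)$.

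Once the equality $C_q^s = co(C_{max}^s)$ is established, Corollary \ref{coCmax} gives that $C_{max}^s$ is dense in $co(C_{max}^s) = C_q^s$, and then taking closures and using $\overline{C_q^s} = C_{qa}^s$ yields $\overline{C_{max}^s} = C_{qa}^s$. The main obstacle, and the only real content, is locating the right tools: recognizing that Theorem \ref{WinterThm} plus the classification of tracial states on $\bigoplus_k \mathbb{M}_{d_k}$ turns a synchronous quantum correlation into a rational-free convex combination of normalized-trace correlations, each of which is maximally entangled by Lemma \ref{canonicalLemma}. No delicate approximation is needed for the equality itself; the approximation only enters at the final step through Corollary \ref{coCmax}, where convex combinations with possibly irrational weights are replaced by rational ones.
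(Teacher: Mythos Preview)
Your proposal is correct and follows essentially the same route as the paper's own proof: both directions use exactly the ingredients you name---Theorem \ref{WinterThm}, the structure of finite-dimensional $C^*$-algebras, the classification of tracial states on $\bigoplus_k \mathbb{M}_{d_k}$ as convex combinations of normalized traces (the paper cites Davidson, Example IV.5.4 for this), and Lemma \ref{canonicalLemma}---and the density and closure statements are deduced exactly as you indicate from Corollary \ref{coCmax} and the Kim--Paulsen--Schafhauser result.
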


\begin{proof} Clearly, $co(C_{max}^s) \subseteq C_q^s$ since $C_{max}^s \subset C_q^s$ and $C_q^s$ is a convex set.

Conversely, let $p \in C_q^s$. By Theorem \ref{WinterThm}, there exist a finite-dimensional $C^*$-algebra $\mathfrak{A}$, a trace $\tau:\mathfrak{A} \rightarrow \mathbb{C}$ and projection-valued measures $\{E_{x,i}\} \subset \mathfrak{A}$ such that $p(i,j|x,y)=\tau(E_{x,i}E_{y,j})$ for all $i,j,x$ and $y$. Since $\mathfrak{A}$ is finite-dimensional, we may assume that $\mathfrak{A} = \bigoplus_{k=1}^M \mathbb{M}_{n_k}$ for some positive integers $n_1,\dots, n_M$. For each $k=1,\dots,M$, let $E_{x,i}^{(k)}$ be the projection of $E_{x,i}$ onto the $k$-th summand. Then the $\{E_{x,i}^{(k)}\}$'s are projection-valued measures on $\mathbb{M}_{n_k}$.

Defining $p_k(i,j|x,y) := \frac{1}{n_k}Tr(E_{x,i}^{(k)}E_{y,j}^{(k)})$, it is easy to see by Lemma \ref{canonicalLemma} that $p_k \in C_{max}^s$ for each $k$. Moreover, by Example IV.5.4 of Davidson's textbook  \cite{MR1402012}, there exist positive scalars $t_1, \dots, t_M$ such that $\sum_{k=1}^M t_k = 1$ and \begin{eqnarray} p(i,j|x,y) & = & \tau(E_{x,i}E_{y,j}) \nonumber \\ & = & \sum_{k=1}^M \frac{t_k}{n_k} Tr(E_{x,i}^{(k)}E_{y,j}^{(k)}) \nonumber \\ & = & \sum_{k=1}^M t_k p_k(i,j|x,y). \nonumber \end{eqnarray} Thus, $p \in co(C_{max}^s)$.

The final statement follows from Corollary \ref{coCmax} and Theorem III.6 of Kim-Paulsen-Schafhauser's paper  \cite{MR3776034} that $\overline{C_q^s} = C_{qa}^s$. \end{proof}

After writing the initial draft of this paper, the authors learned that the relation $C_q^s = co(C_{max}^s)$ appears in some other preprints, namely Theorem 9 of   \cite{Lackey} and Corollary 5.5 of   \cite{LupiniEtAlPerfect2020}. The density result appears to be new.

Since $C_{max}^s$ is dense in $C_q^s$, it is natural to ask if the same relationship holds for non-synchronous correlations. We will explore this question further in the next section.

As a byproduct of Theorem \ref{SyncQSEqualsCoCmax}, we obtain a new reformulation of Connes' embedding conjecture in terms of the set of maximally entangled quantum correlations. By Corollary III.8 of Kim-Paulsen-Schafhauser's paper  \cite{MR3776034}, Connes' embedding conjecture is equivalent to the assertion that $C_{qc}^s(n,m) = C_{qa}^s(n,m)$ for all $n$ and $m$. By the previous theorem, we know that $\overline{C_{max}^s} = C_{qa}^s$. Combining these two results, we arrive at the following corollary.

\begin{corollary} \label{ConnesThm} The following statements are equivalent. \begin{enumerate} \item Connes' embedding conjecture is true. \item $C_{qc}^s(n,m) = \overline{C_{max}^s(n,m)}$ for all $n$ and $m$. \end{enumerate} \end{corollary}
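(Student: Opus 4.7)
The plan is that this corollary is a direct substitution combining two equivalences already established or cited: Theorem \ref{SyncQSEqualsCoCmax} and the cited Corollary III.8 of Kim--Paulsen--Schafhauser. There is no real obstacle here, since all the work has been absorbed into those earlier results; the proof is essentially a one-line chaining.

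First I would recall that Theorem \ref{SyncQSEqualsCoCmax} gives the identity $\overline{C_{max}^s(n,m)} = C_{qa}^s(n,m)$ for every choice of $n$ and $m$. Next I would cite Corollary III.8 of \cite{MR3776034}, which states that Connes' embedding conjecture is equivalent to the assertion that $C_{qc}^s(n,m) = C_{qa}^s(n,m)$ for all $n$ and $m$. With these two facts in hand, the two directions of the corollary are immediate substitutions.

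For (1) $\Rightarrow$ (2), I would assume Connes' embedding conjecture, so that $C_{qc}^s(n,m) = C_{qa}^s(n,m)$ for all $n,m$, and then replace $C_{qa}^s(n,m)$ by $\overline{C_{max}^s(n,m)}$ using Theorem \ref{SyncQSEqualsCoCmax}. Conversely, for (2) $\Rightarrow$ (1), I would assume $C_{qc}^s(n,m) = \overline{C_{max}^s(n,m)}$ for all $n,m$ and again use Theorem \ref{SyncQSEqualsCoCmax} to rewrite the right-hand side as $C_{qa}^s(n,m)$, so that the Kim--Paulsen--Schafhauser characterization yields Connes' embedding conjecture.

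The only thing to double-check is that Theorem \ref{SyncQSEqualsCoCmax} really does deliver the equality $\overline{C_{max}^s(n,m)} = C_{qa}^s(n,m)$ for each fixed $(n,m)$ rather than only in some aggregated sense; but the construction used there (the block-diagonal direct sum from the proof of Theorem \ref{CmaxThm}) preserves the number of experiments and outcomes, so the identity does hold parameter-by-parameter, which is exactly what the Kim--Paulsen--Schafhauser reformulation requires. This is the only bookkeeping point worth flagging; once it is noted, the proof is a two-line deduction.
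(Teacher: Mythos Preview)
Your proposal is correct and matches the paper's own argument essentially verbatim: the paper derives the corollary in the paragraph preceding it by combining Theorem~\ref{SyncQSEqualsCoCmax} (giving $\overline{C_{max}^s}=C_{qa}^s$) with Corollary~III.8 of \cite{MR3776034}. Your added remark that the equality holds for each fixed $(n,m)$ is a reasonable clarification but not something the paper pauses on.
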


We conclude this section by constructing a nested sequence of closed convex sets which form an inner approximation of $C_q^s$.

\begin{theorem} For each positive integer $k$, define $C_k^s := co(C_{max,k!}^s)$. Then for each $k$, $C_k^s$ is closed and convex, $C_k^s \subseteq C_{k+1}^s$, and $\cup_{k=1}^\infty C_k^s = C_q^s$. \end{theorem}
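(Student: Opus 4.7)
The plan is to establish the four claims in turn, with the union identity $\bigcup_k C_k^s = C_q^s$ as the main work.

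First, convexity of $C_k^s$ is immediate since $C_k^s$ is defined as a convex hull. For closedness, I would observe that $C_{max,k!}^s$ is the continuous image of the compact set of tuples of projection-valued measures in $\mathbb{M}_{k!}$ under the trace map $(\{E_{x,i}\}) \mapsto \bigl(\frac{1}{k!}\operatorname{Tr}(E_{x,i}E_{y,j})\bigr)$ supplied by Lemma \ref{canonicalLemma}. Projections form a closed bounded subset of $\mathbb{M}_{k!}$ (cut out by the polynomial conditions $P^2 = P = P^*$ and the sum-to-identity constraint), so $C_{max,k!}^s$ is compact, and its convex hull inside the finite-dimensional ambient space of correlations is then compact by Carathéodory's theorem, hence closed.

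For the nesting $C_k^s \subseteq C_{k+1}^s$, it suffices to show $C_{max,k!}^s \subseteq C_{max,(k+1)!}^s$. Given projection-valued measures $\{E_{x,i}\} \subset \mathbb{M}_{k!}$ realizing $p$, I would form $\widetilde E_{x,i} := \bigoplus_{\ell=1}^{k+1} E_{x,i} \in \mathbb{M}_{(k+1)!}$; these remain projection-valued measures, and a direct trace computation gives $\frac{1}{(k+1)!}\operatorname{Tr}(\widetilde E_{x,i}\widetilde E_{y,j}) = \frac{k+1}{(k+1)!}\operatorname{Tr}(E_{x,i}E_{y,j}) = \frac{1}{k!}\operatorname{Tr}(E_{x,i}E_{y,j}) = p(i,j|x,y)$, placing $p$ in $C_{max,(k+1)!}^s$. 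Passing to convex hulls yields the containment.

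The inclusion $\bigcup_k C_k^s \subseteq C_q^s$ follows from $C_{max}^s \subseteq C_q^s$ and convexity of $C_q^s$. For the reverse, let $p \in C_q^s$. The proof of Theorem \ref{SyncQSEqualsCoCmax} expresses $p$ as a finite convex combination $\sum_{k=1}^M t_k p_k$ where each $p_k \in C_{max,n_k}^s$ for some integers $n_1,\dots,n_M$. Choosing any $N$ with $N \geq \max_k n_k$ ensures $n_k \mid N!$ for each $k$, and the direct-sum replication from the previous step (repeating each $E_{x,i}^{(k)}$ exactly $N!/n_k$ times) shows $p_k \in C_{max,N!}^s$ for every $k$. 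Hence $p \in co(C_{max,N!}^s) = C_N^s$, and thus $p \in \bigcup_k C_k^s$.

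The main obstacle will be the bookkeeping in the union step: the summands $p_k$ arise naturally at distinct dimensions $n_k$ and must all be rescaled to a common larger dimension before they can be combined inside a single set of the form $co(C_{max,d}^s)$. The choice of $k!$ (rather than $k$) in the definition of $C_k^s$ is precisely what guarantees both that every finite collection of dimensions is eventually absorbed and that the chain $(C_k^s)_k$ remains nondecreasing, making the direct-sum trick work uniformly at every stage.
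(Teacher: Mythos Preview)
Your proposal is correct and shares the paper's essential ideas: the direct-sum replication $E_{x,i} \mapsto \bigoplus E_{x,i}$ for both the nesting $C_{max,k!}^s \subseteq C_{max,(k+1)!}^s$ and for lifting each $C_{max,d}^s$ into $C_{max,d!}^s$, together with Theorem~\ref{SyncQSEqualsCoCmax} for the union identity. The two proofs differ only in organization and in the closedness step. For closedness, the paper argues sequentially: given $p_k \to p$ in $C_{max,d}^s$ it uses Proposition~\ref{rationalMarginal} to force the marginals $p_k(i,i|x,x)$ to be eventually constant in $\{0,1/d,\dots,1\}$, then extracts limit projections of fixed rank. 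Your argument is more direct and arguably cleaner: the space of tuples of projection-valued measures in $\mathbb{M}_d$ is compact (closed and bounded, cut out by polynomial relations), the trace map is continuous, so $C_{max,d}^s$ is compact, and Carath\'eodory gives compactness of the convex hull in the finite-dimensional ambient space. For the union, the paper first embeds each single $p \in C_{max,d}^s$ into $C_d^s$ and then invokes convexity of the nested union $\bigcup_k C_k^s$, whereas you embed all summands of a given convex combination simultaneously at a common level $N!$; both routes are equivalent and rest on the same divisibility observation $n_k \mid N!$ for $N \geq n_k$.
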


\begin{proof} It is obvious that each $C_k^s$ is convex. To see that each $C_k^s$ is closed, it suffices to show that $C_{max,d}^s$ is closed for each $d$. To this end, suppose that $\{p_k\}_{k=1}^\infty \subseteq C_{max,d}^s$ is a convergent sequence with $p_k \rightarrow p \in C_{qa}^s$. We will show that $p \in C_{max,d}^s$. To see this, first note that for any $q \in C_{ns}^s$, $q(i,i|x,x) = \sum_j q(i,j|x,x) = q_A(i|x)$. Hence, Proposition \ref{rationalMarginal} implies that for each $k,i,$ and $x$, $p_k(i,i|x,x) \in \{0, 1/d, 2/d, \dots, 1\}$. Since the sequence $\{p_k(i,i|x,x)\}$ converges, it must be constant after finitely many terms. Thus we may assume that $p_k(i,i|x,x) = n_{i,x}/d$ for every choice of $i$ and $x$.

Since $p_k \in C_{max,d}^s$, Lemma \ref{canonicalLemma} says we may choose for each $k$ projection-valued measures $\{E_{x,i}^{(k)}\} \subset \mathbb{M}_d$ such that $p_k(i,j|x,y) = \frac{1}{d} Tr(E_{x,i}^{(k)} E_{y,j}^{(k)})$, and we may further assume that $Rank(E_{x,i}^{(k)}) = n_{x,i}$ for all $k$. Since the set of $d \times d$ projection matrices of rank $n_{x,i}$ is compact, there exists a rank $n_{x,i}$ projection $E_{x,i}$ which is a limit point of the set of projections $\{E_{x,i}^{(k)}\}$. By the continuity of the trace, we see that $p(i,j|x,y) = \frac{1}{d} Tr(E_{x,i}E_{y,j})$. Moreover, since $E_{x,i}^{(k)}$ converges to $E_{x,i}$, we have $I_d = \sum_{i=1}^m E_{x,i}^{(k)}$ converges to $\sum_{i=1}^m E_{x,i}$ and hence $\{E_{x,i}\}$ is a projection valued measure.
%We need only check that each $\{E_{x,i}\}$ is a projection-valued measure. To see this, observe that $Tr(E_{x,i}E_{x,j}) = 0$ since $p$ is synchronous. But $Tr(AB)=0$ implies that $AB=0$ for positive matrices $A$ and $B$. Thus, the $E_{x,i}$'s are mutually orthogonal. Since $\sum Rank(E_{x,i}) = 1$, we conclude that $\{E_{x,i}\}$ is a projection-valued measure.

To see that $C_k^s \subseteq C_{k+1}^s$, it suffices to show that $C_{max,k!}^s \subset C_{max,(k+1)!}^s$. Pick $p \in C_{max,k!}^s$ with $p(i,j|x,y) = \frac{1}{k!} Tr(E_{x,i} E_{y,j})$. Then \[ p(i,j|x,y) = \sum_{n=1}^{k+1} \frac{1}{k+1} p(i,j|x,y) = \frac{1}{(k+1)!} Tr(\oplus_{n=1}^{k+1} E_{x,i} \oplus_{n=1}^{k+1} E_{y,j}). \] Hence, $p \in C_{max,(k+1)!}^s$.

Finally, observe that by Theorem \ref{SyncQSEqualsCoCmax}, $C_q^s = co(C_{max}^s)$. We need only show that $co(C_{max}^s) = \cup_{k=1}^\infty C_k^s$. Obviously $\cup_{k=1}^\infty C_k^s \subseteq co(C_{max}^s)$. To see the other inclusion, pick $p \in C_{max}^s$. Then $p \in C_{max,d}^s$ for some $d$. By Lemma \ref{canonicalLemma}, there exists $\{E_{x,i}\} \subseteq \mathbb{M}_d$ such that $p(i,j|x,y) = \frac{1}{d} Tr(E_{x,i} E_{y,j})$. Hence, $p \in C_{max,d!}^s$, since \[ p(i,j|x,y) = \sum_{n=1}^{(d-1)!} \frac{1}{(d-1)!} p(i,j|x,y) \nonumber = \frac{1}{d!} Tr(\oplus_{n=1}^{(d-1)!} E_{x,i} \oplus_{n=1}^{(d-1)!} E_{y,j}). \] So $p \in C_d^s$. Thus, $C_{max}^s \subset \cup_k C_k^s$. By convexity, $co(C_{max}) \subseteq \cup_{k=1}^\infty C_k^s$, completing the proof. \end{proof}

We remark that $C_1^s = C_{loc}^s$, since $C_1^s$ is defined to be $ext(C_{max,1}^s)$ and $ext(C_{max,1}^s)=C_{loc}^s$ by Lemma \ref{lemmaCLoc}. Thus, the sequence $\{C_k^s\}$ satisfies \[ C_{loc}^s = C_1^s \subseteq C_2^s \subseteq \dots \subset C_q^s, \qquad \bigcup_{k=1}^\infty C_k^s = C_q^s. \] We hope that this observation may someday shed light on the question of why $C_q^s$ is not closed in general.

\section{Corners of synchronous correlations}

Given a synchronous correlation $p$, we wish to define a sub-correlation $\pi(p)$ of $p$ which captures the interaction of some non-synchronous partition of Alice and Bob's experiments. We call this sub-correlation the corner of $p$. To this end, we make the following definition.

\begin{definition} \label{cornerDefn} Let $n_A,n_B$, and $m$ be positive integers and set $n := n_A + n_B$. Then for each $r \in \{loc, max, q, qs, qa, qc, ns\}$, define the projection map $\pi_{n_A,n_B}:C_r(n,m) \rightarrow C_r(n_A,n_B,m)$ by $\pi_{n_A,n_B}(p)(i,j|x,y) = p(i,j|x,y+n_A)$ for each $1 \leq i,j \leq m$, $1 \leq x \leq n_A$ and $1 \leq y \leq n_B$. We call $\pi_{n_A,n_B}(p)$ the \textbf{corner} of $p$. When the indices $n_A$ and $n_B$ are understood from context, we simply write $\pi(p)$. \end{definition}

When $p \in C_r(n,m)$, we may regard $p$ as a block matrix \[ \begin{bmatrix} A & B \\ C & D \end{bmatrix} \] where the dimensions of $A, B, C,$ and $D$ are $mn_A \times mn_A$, $mn_A \times mn_B$, $mn_B \times mn_A$, and $mn_B \times mn_B$, respectively. With this notation, we can identify $\pi(p)$ with the matrix $B$.

We are interested in studying the corners of synchronous correlations. It is easy to see that, for each $r \in \{loc, max, q, qs, qa, qc, ns\}$, $\pi(C_r^s(n,m)) \subseteq C_r(n_A,n_B,m)$ for all $n_A,n_B,n$ and $m$ satisfying the conditions of Definition \ref{cornerDefn}. We raise the following question.

\begin{question} \label{mainQuestion} When is $\pi(C_r^s(n,m)) = C_r(n_A,n_B,m)$? \end{question}

Our interest is partly motivated by the following observation concerning $C_{qs}$. Recently, a surprising result of Kim-Paulsen-Schafhauser   \cite{MR3776034} (Theorem III.10) showed that the sets $C_q^s$ and $C_{qs}^s$ coincide. However, it seems that this does not hold in the non-synchronous case   \cite{CqNeqCqs}. If it were the case that $\pi(C_{qs}^s) = C_{qs}$, we see that \[ C_{qs} = \pi(C_{qs}^s) = \pi(C_q^s) \subseteq C_q,\] and hence $C_q = C_{qs}$.

In the remainder of this section, we will answer Question \ref{mainQuestion} in the affirmative for the cases $r \in \{loc, max, ns\}$ and provide partial results in the other cases. We begin with a crucial proposition.

\begin{proposition} \label{CornerLemmaCmax} For every positive integer $d$, $\pi(C_{max,d}^s) = C_{max,d}$. Consequently $\pi(co(C_{max,d}^s)) = co(C_{max,d})$, $\pi(C_{max}^s)=C_{max}$, $\pi(co(C_{max}^s)) = co(C_{max})$, and $\pi(\overline{C_{max}^s}) = \overline{C_{max}}$. \end{proposition}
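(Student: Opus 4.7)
The plan is to prove the base equality $\pi(C_{max,d}^s) = C_{max,d}$ first by constructing explicit representations, and then derive all the consequences from functorial properties of $\pi$ (namely that it is affine and continuous on a compact ambient space).

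For the inclusion $\pi(C_{max,d}^s) \subseteq C_{max,d}$, given any synchronous $q \in C_{max,d}^s(n,m)$ with $n = n_A + n_B$, Lemma \ref{canonicalLemma} gives projection-valued measures $\{G_{z,i}\}_{i=1}^m \subset \mathbb{M}_d$ for $1 \leq z \leq n$ such that $q(i,j|x,y) = \frac{1}{d} Tr(G_{x,i} G_{y,j})$. Then $\pi(q)(i,j|x,y) = q(i,j|x, n_A+y) = \frac{1}{d} Tr(G_{x,i} G_{n_A+y,j})$, and the families $\{G_{x,i}\}$ (for $x \leq n_A$) and $\{G_{n_A+y,j}\}$ (for $y \leq n_B$) exhibit $\pi(q)$ as a maximally entangled correlation on $\mathbb{M}_d$. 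For the reverse inclusion $C_{max,d} \subseteq \pi(C_{max,d}^s)$, given $p \in C_{max,d}(n_A,n_B,m)$, use Lemma \ref{canonicalLemma} to write $p(i,j|x,y) = \frac{1}{d} Tr(E_{x,i} F_{y,j})$ with projection-valued measures $\{E_{x,i}\}, \{F_{y,j}\} \subset \mathbb{M}_d$. Concatenate these into a single indexed family by setting $G_{z,i} := E_{z,i}$ for $1 \leq z \leq n_A$ and $G_{z,i} := F_{z - n_A, i}$ for $n_A < z \leq n$. Each family $\{G_{z,i}\}_{i=1}^m$ is still a projection-valued measure on $\mathbb{M}_d$, so $q(i,j|x,y) := \frac{1}{d} Tr(G_{x,i} G_{y,j})$ belongs to $C_{max,d}^s(n,m)$ (synchronicity is automatic since within one PVM, $G_{z,i} G_{z,j} = 0$ for $i \neq j$), and $\pi(q) = p$ by construction.

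For the consequences involving convex hulls, observe that the corner map $\pi$ is the restriction of coordinates of a tuple, hence is an affine (indeed linear) map between finite-dimensional real vector spaces. Affine maps commute with convex hulls, so $\pi(co(S)) = co(\pi(S))$ for any $S$; applying this to $S = C_{max,d}^s$ and $S = C_{max}^s$ gives the two convex-hull statements directly from the base equality. The statement $\pi(C_{max}^s) = C_{max}$ follows immediately by taking unions over $d$ of the base equality, since $C_{max} = \bigcup_d C_{max,d}$ and $C_{max}^s = \bigcup_d C_{max,d}^s$ and $\pi$ commutes with unions.

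The remaining statement $\pi(\overline{C_{max}^s}) = \overline{C_{max}}$ requires a bit more care and is the main place where something could go wrong. Since $\pi$ is continuous, $\pi(\overline{C_{max}^s}) \subseteq \overline{\pi(C_{max}^s)} = \overline{C_{max}}$. For the reverse inclusion, take $p \in \overline{C_{max}}$ and choose $p_k \in C_{max}$ with $p_k \to p$. By the base equality (extended to $C_{max}$), each $p_k = \pi(q_k)$ for some $q_k \in C_{max}^s(n,m)$. Since the ambient correlation space sits inside $[0,1]^{n^2 m^2}$ which is compact, we can pass to a convergent subsequence $q_{k_\ell} \to q$. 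The limit $q$ lies in $\overline{C_{max}^s}$ and, by continuity of $\pi$, satisfies $\pi(q) = p$. The only subtle point to watch is that the dimension $d$ of the representations used for the $q_k$ may not be bounded, so we genuinely need the compactness of the correlation space rather than compactness of the parameter space of PVMs; this is fine because the statement only asserts closure inside the correlation space.
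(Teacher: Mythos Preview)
Your proof is correct and follows essentially the same approach as the paper. The construction of the synchronous lift by concatenating the two families of PVMs into a single indexed family is exactly what the paper does, and your derivation of the consequences from $\pi$ being affine and continuous matches the paper's one-line justification; you are simply more explicit about the forward inclusion (which the paper records as ``easy to see'' before the proposition) and about the compactness argument needed for the closure statement.
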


\begin{proof} We show that $C_{max,d} \subseteq \pi(C_{max,d}^s)$. To this end, suppose that $p \in C_{max,d}$. By Lemma \ref{canonicalLemma} there exist projection-valued measures $\{E_{x,i}\}$ and $\{F_{y,j}\}$ in $\mathbb{M}_d$ such that $p(i,j|x,y) = \frac{1}{d} Tr(E_{x,i}F_{y,j})$. For each $x \leq n_A$, set $\tilde{E}_{x,i} = \tilde{F}_{x,i} = E_{x,i}$, and for each $x$ satisfying $n_A < x \leq n_A + n_B$ set  $\tilde{E}_{x,i} = \tilde{F}_{x,i} := F_{x-n_A,i}$. Setting $\tilde{p}(i,j|x,y) := \frac{1}{d}Tr(\tilde{E}_{x,i}\tilde{F}_{y,j})$ defines a correlation in $C_{max,d}$ satisfying $\pi(\tilde{p}) = p$. Moreover, $\tilde{p}$ is synchronous since for each $x \leq n$ and $i \neq j$, we have \[ \tilde{p}(i,j|x,x) = \frac{1}{d}(\tilde{E}_{x,i}\tilde{F}_{x,j}) = 0 \] since $\tilde{E}_{x,i}\tilde{F}_{x,j}$ equals either $E_{x,i}E_{x,j}$ or $F_{x,i}F_{x,j}$, both of which are zero. So $C_{max,d} \subseteq \pi(C_{max,d}^s)$ and consequently $\pi(C_{max,d}^s)=C_{max,d}$.  The remaining statements follow from the observation that $C_{max} = \cup_{d=1}^\infty C_{max,d}$ and the fact that $\pi$ is continuous and affine.   \end{proof}

%\begin{corollary} \label{CornerCorrCmax} Every $p \in C_{max}$ is the corner of some $\tilde{p} \in C_{max}^s$, hence $\pi(C_{max}^s)=C_{max}$. \end{corollary}

The above proposition answers Question \ref{mainQuestion} in the affirmative for the case $r=max$. While a complete answer to Question \ref{mainQuestion} is unknown, the next theorem summarizes some partial results towards this end.

\begin{theorem} For all positive integers $n_A, n_B, n$, and $m$ satisfying $n_A + n_B = n$, the following statements are true. \begin{enumerate} \item The sets $\pi(C_{loc}^s(n,m))$ and $C_{loc}(n_A,n_B,m)$ coincide. \item The sets $\pi(C_q^s(n,m))$ and $C_q(n_A,n_B,m)$ coincide if and only if \[C_q(n_A,n_B,m)=co(C_{max}(n_A,n_B,m)).\] \item The sets $\pi(C_{qa}^s(n,m))$ and $C_{qa}(n_A,n_B,m)$ coincide if and only if \[ C_{qa}(n_A,n_B,m) = \overline{C_{max}(n_A,n_B,m)}. \] \end{enumerate} \end{theorem}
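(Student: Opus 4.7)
The plan is to reduce all three statements to combinations of earlier results: the description of the extreme points of $C_{loc}$ from Lemma \ref{lemmaCLoc}, the identity $\pi(C_{max}^s) = C_{max}$ from Proposition \ref{CornerLemmaCmax}, and the convex-hull and closure descriptions $C_q^s = co(C_{max}^s)$ and $\overline{C_{max}^s} = C_{qa}^s$ from Theorem \ref{SyncQSEqualsCoCmax}. The uniform tool is that $\pi$, being a coordinate restriction on the correlation vector, is both affine and continuous, so it commutes with taking convex hulls and, on compact domains, with taking closures.

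For (1), I would first note that for fixed $n$ and $m$ the sets $C_{max,1}$ and $C_{max,1}^s$ of deterministic correlations are finite, so $C_{loc} = co(C_{max,1})$ and $C_{loc}^s = co(C_{max,1}^s)$ with no closure needed. The $d=1$ case of Proposition \ref{CornerLemmaCmax} gives $\pi(C_{max,1}^s) = C_{max,1}$, and affinity of $\pi$ then yields $\pi(C_{loc}^s) = \pi(co(C_{max,1}^s)) = co(\pi(C_{max,1}^s)) = co(C_{max,1}) = C_{loc}$. For (2), Theorem \ref{SyncQSEqualsCoCmax} supplies $C_q^s = co(C_{max}^s)$, and the same affinity argument together with Proposition \ref{CornerLemmaCmax} gives $\pi(C_q^s) = co(\pi(C_{max}^s)) = co(C_{max})$; hence $\pi(C_q^s) = C_q$ holds exactly when $C_q = co(C_{max})$.

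For (3), the final clause of Theorem \ref{SyncQSEqualsCoCmax} gives $\overline{C_{max}^s} = C_{qa}^s$. Since a correlation set is a bounded subset of a finite-dimensional real vector space, $C_{qa}^s$ is compact, and continuity of $\pi$ then lets closures pass through the image: $\pi(C_{qa}^s) = \pi(\overline{C_{max}^s}) = \overline{\pi(C_{max}^s)} = \overline{C_{max}}$ by Proposition \ref{CornerLemmaCmax}. Thus $\pi(C_{qa}^s) = C_{qa}$ if and only if $C_{qa} = \overline{C_{max}}$. I do not foresee a serious obstacle in executing this plan; the only step worth double-checking is the commutation of $\pi$ with closure in (3), which relies on compactness of $\overline{C_{max}^s}$ and hence on the finite-dimensional ambient vector space for correlations.
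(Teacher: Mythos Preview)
Your proposal is correct and follows essentially the same approach as the paper: both reduce all three parts to Proposition \ref{CornerLemmaCmax} combined with Lemma \ref{lemmaCLoc} and Theorem \ref{SyncQSEqualsCoCmax}. The paper simply cites the already-packaged identities $\pi(co(C_{max,d}^s))=co(C_{max,d})$, $\pi(co(C_{max}^s))=co(C_{max})$, and $\pi(\overline{C_{max}^s})=\overline{C_{max}}$ from Proposition \ref{CornerLemmaCmax}, whereas you re-derive them from affinity and continuity of $\pi$ (and compactness for the closure step), which is precisely how that proposition is justified.
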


\begin{proof} (1) Recall that $C_{loc} = co(C_{max,1})$ and $C_{loc}^s = co(C_{max,1}^s)$, by Lemma \ref{lemmaCLoc}. Consequently $\pi(C_{loc}^s) = \pi(co(C_{max,1}^s)) = co(C_{max,1})=C_{loc}$ by Proposition \ref{CornerLemmaCmax}.

% Clearly $\pi(C_{loc}^s) \subseteq C_{loc}$. For the other direction, recall that the extreme points of $C_{loc}$ are precisely the elements of $C_{max,1}$. Suppose that $p \in C_{loc} = co(C_{max,1})$. Then there exist positive scalars $\{t_k\}_{k=1}^N$ and correlations $\{p_k \}_{k=1}^N \subset C_{max,1}$ satisfying $\sum_{k=1}^N t_k = 1$ and $p = \sum_{k=1}^N t_k p_k$. By Lemma \ref{CornerLemmaCmax}, there exist $\tilde{p}_k \in C_{max,1}^s \subset C_{loc}^s$ such that $\pi(\tilde{p}_k) = p_k$ for each $k$. Defining $\tilde{p} := \sum_{k=1}^N t_k \tilde{p}_k$, we have $p = \pi(\tilde{p}) \in \pi(C_{loc}^s)$ and we conclude that $\pi(C_{loc}^s) = C_{loc}$.

(2) By Theorem \ref{SyncQSEqualsCoCmax}, we know that $C_q^s = co(C_{max}^s)$. Therefore $C_q = \pi(C_{q}^s)$ if and only if $C_q=co(C_{max})$ since $\pi(C_q^s) = \pi(co(C_{max}^s)) = co(C_{max})$ by Proposition \ref{CornerLemmaCmax}.

%Thus it suffices to show that $\pi(co(C_{max}^s)) = co(C_{max})$, since this equality implies that \[ co(C_{max}) = \pi(C_q^s) \subseteq C_q \] and the statement follows. To this end, let $p \in co(C_{max})$. Then there exist positive scalars $\{t_k\}_{k=1}^N$ and correlations $\{p_k\}_{k=1}^N \subset C_{max}$ such that $\sum_{k=1}^N t_k = 1$ and $p = \sum_{k=1}^N t_k p_k$. By Corollary \ref{CornerCorrCmax}, there exist $\tilde{p}_k \in C_{max}^s$ such that $\pi(\tilde{p}_k)=p_k$ for each $k$. Define $\tilde{p} := \sum_{k=1}^N t_k \tilde{p}_k$. Then $p = \pi(\tilde{p}) \in \pi(co(C_{max}^s))$ and we conclude that $co(C_{max}) \subseteq \pi(co(C_{max}^s))$. Combining this with the observation that $\pi(co(C_{max}^s)) \subseteq co(C_{max})$, we obtain the desired equality.

(3) By Theorem \ref{SyncQSEqualsCoCmax}, we know that $C_{qa}^s = \overline{C_{max}^s}$. By Proposition \ref{CornerLemmaCmax}, $\pi(C_{qa}^s) = \pi(\overline{C_{max}^s}) = \overline{C_{max}}$. The statement follows. \end{proof}

%By Theorem \ref{SyncQSEqualsCoCmax}, we know that $C_{qa}^s = \overline{C_{max}^s}$. Similar to the proof of (b), it suffices to show that $\pi(\overline{C_{max}^s}) = \overline{C_{max}}$, since this equality implies that \[ \overline{C_{max}} = \pi(C_{qa}^s) \subseteq C_{qa} \] and the statement follows. Now suppose that $p \in \overline{C_{max}}$. Then there exists a sequence $\{p_k\}_{k=1}^\infty \subset C_{max}$ such that $p_k \rightarrow p$ in $C_{qa}$. By Corollary \ref{CornerCorrCmax} there exist $\tilde{p}_k \in C_{max}^s$ satisfying $\pi(\tilde{p}_k) = p_k$ for each $k$. Since $\overline{C_{max}^s}$ is finite-dimensional and compact, there exists a subsequence $\{\tilde{p}_{k_n}\}_{n=1}^\infty$ of $\{\tilde{p}_k\}_{k=1}^\infty$ which converges in $\overline{C_{max}^s}$ to some correlation $\tilde{p}$. Since $\pi$ is continuous, \[\pi(\tilde{p}) = \lim_{n \rightarrow \infty} \pi(\tilde{p}_{k_n}) = \lim_{n \rightarrow \infty} p_{k_n} = p\] and we conclude that $\overline{C_{max}} \subseteq \pi(\overline{C_{max}^s})$. Combining this with the observation that $\pi(\overline{C_{max}^s}) \subseteq \overline{C_{max}}$, we obtain the desired equality. \end{proof}

We conclude with a proof that $\pi(C_{ns}^s) = C_{ns}$. In the following, we call a correlation $p \in C_r(n,m)$ \textbf{symmetric} if $p(i,j|x,y) = p(j,i|y,x)$ for all $i,j,x$ and $y$. The next theorem shows that to each $p \in C_{ns}$ there exists a symmetric $\tilde{p} \in C_{ns}^s$ such that $p = \pi(\tilde{p})$. The ability to choose a symmetric correlation $\tilde{p}$ is worth noting because every correlation in $C_r^s$ for $r \in \{loc, q, qs, qa, qc\}$ is known to be symmetric   \cite{MR3356844}, though this need not hold for correlations in $C_{ns}^s$.

\begin{theorem} The sets $\pi(C_{ns}^s)$ and $C_{ns}$ coincide. In particular, every correlation in $C_{ns}$ is the corner of some symmetric correlation in $C_{ns}^s$.\end{theorem}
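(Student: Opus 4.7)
The inclusion $\pi(C_{ns}^s) \subseteq C_{ns}$ is already observed in the paragraph preceding Question \ref{mainQuestion}, so the task is, given $p \in C_{ns}(n_A,n_B,m)$, to produce an explicit symmetric $\tilde p \in C_{ns}^s(n,m)$ with $\pi(\tilde p) = p$. The plan is to define $\tilde p$ piecewise, block by block in the decomposition $n = n_A + n_B$: the upper-right block is forced to equal $p$, the lower-left block is forced by the symmetry requirement, and the two on-diagonal blocks are filled in explicitly using only the marginals $p_A$ and $p_B$ of $p$, in a way that is automatically synchronous and nonsignalling.

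Concretely, for $x \leq n_A < y$ I would set $\tilde p(i,j|x,y) := p(i,j|x,y-n_A)$, and for $y \leq n_A < x$ I would set $\tilde p(i,j|x,y) := p(j,i|y,x-n_A)$; for $x,y \leq n_A$ I set $\tilde p(i,j|x,x) := \delta_{ij}\, p_A(i|x)$ on the diagonal and $\tilde p(i,j|x,y) := p_A(i|x)\, p_A(j|y)$ when $x \neq y$, defining the opposite on-diagonal block analogously from $p_B$. Synchronicity is immediate from the Kronecker factors; the corner identity $\pi(\tilde p) = p$ holds by construction; and symmetry can be checked case by case (by design on the two cross-blocks, by $(i,x) \leftrightarrow (j,y)$ inspection of the product form on the on-diagonal blocks, and trivially on the synchronous diagonal).

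The substantive step is verifying the nonsignalling condition. Fixing $x \leq n_A$ and an outcome $i$, I would compute $\sum_j \tilde p(i,j|x,y)$ as $y$ ranges over the three regions: $y = x$ gives $p_A(i|x)$ via the Kronecker $\delta$; $y \leq n_A$ with $y \neq x$ gives $p_A(i|x) \sum_j p_A(j|y) = p_A(i|x)$; and $y > n_A$ gives $\sum_j p(i,j|x,y-n_A) = p_A(i|x)$ precisely because $p$ itself is nonsignalling. The case $x > n_A$ and the analogous marginal computation for Bob run identically with $p_B$ in place of $p_A$. Thus the only place where the hypothesis on $p$ enters nontrivially is in matching the on-diagonal block marginal $p_A(i|x)$ to the cross-block marginal $\sum_j p(i,j|x,y-n_A)$; this is exactly what the nonsignalling assumption delivers, and it is the main (though mild) obstacle to overcome.
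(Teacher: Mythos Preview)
Your proposal is correct and follows essentially the same construction as the paper: the same block-wise definition of $\tilde p$ using the Kronecker-$\delta$ diagonals and product off-diagonals built from the marginals $p_A,p_B$, with the cross blocks given by $p$ and its transpose. In fact you spell out the nonsignalling verification a bit more fully than the paper, which leaves that check to the reader.
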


\begin{proof} Let $p \in C_{ns}(n_A,n_B,m)$. We proceed by constructing a symmetric correlation $\tilde{p} \in C_{ns}^s(n,m)$ which satisfies $\pi(\tilde{p})=p$, where $n = n_A + n_B$ as usual.

As before, let $p_A(i|x) = \sum_j p(i,j|x,y)$ and $p_B(j|y) = \sum_i p(i,j|x,y)$ denote the marginal densities of $p$. Recall that the nonsignalling conditions dictate that these sums are well-defined. We begin by defining correlations $p_1 \in C_{ns}^s(n_A, m)$ and $p_2 \in C_{ns}^s(n_B,m)$ as follows. For each $x \leq n_A$ and $i,j \leq m$, set
\[ p_1(i,j|x,x) = \begin{cases}
      p_A(i|x) & i = j \\
      0 & i \neq j \\
   \end{cases} \] and for each $y \leq n_A$ with $x \neq y$, define $p_1(i,j|x,y) := p_A(i|x)p_A(j|y)$. Likewise, we define $p_2$ so that for each $y \leq n_B$ and $i,j \leq m$, \[ p_2(i,j|y,y) := \begin{cases}
      p_B(i|y) & i = j \\
      0 & i \neq j \\
   \end{cases} \] and for each $x \leq n_B$ with $x \neq y$, define $p_2(i,j|x,y) := p_B(i|x)p_B(j|y)$. It is clear that $p_1$ and $p_2$ are symmetric and synchronous.  The reader can easily verify that they satisfy the nonsignalling conditions as well. 
  
    Finally we define a correlation $\tilde{p}$ as follows. For each $x,y \leq n$ and $i,j \leq m$, set \[ \tilde{p}(i,j|x,y) := \begin{cases} p_1(i,j|x,y) & x,y \leq n_A \\  p(i,j|x,y - n_A) & x \leq n_A, n_A < y \leq n \\  p(j,i|y,x - n_A) & n_A < x \leq n, y \leq n_A \\ p_2(i,j|x-n_A,y-n_A) & n_A < x,y \leq n   \end{cases} \qquad . \] It is easy to verify that the correlation $\tilde{p}$ is symmetric, synchronous and nonsignalling. Moreover, $\pi(\tilde{p}) = p$.\end{proof}

\section{Almost maximally entangled correlations}

In this section we consider correlations obtained from a slight relaxation of Definition \ref{CmaxDefn}.

\begin{definition} \emph{A quantum correlation $p$ is called \textbf{almost maximally entangled} if $p$ admits a representation $(H_A, H_B, \{P_{x,i}\}, \{Q_{y,j}\}, \phi)$ where $\phi$ is maximally entangled and $\{P_{x,i}\}, \{Q_{y,j}\}$ are positive operator-valued measures (not necessarily projections). We denote by $C'_{max}$ the set of almost maximally entangled correlations.} \end{definition}

The next lemma generalizes Lemma \ref{canonicalLemma} to the setting of almost maximally entangled correlations. %\travis{It also shows that $p \in C_q(n_A,n_B,m)$.}

\begin{lemma} \label{POVMLemma} Every almost maximally entangled correlation is a quantum correlation. In particular, following statements are equivalent.
\begin{enumerate}
    \item $p \in C'_{max}(n_A,n_B,m)$.
    \item There exist positive operator-valued measures $\{P_{x,i}\}_{i=1}^m$ and $\{Q_{y,j}\}_{j=1}^m$ on $\mathbb{C}^d$ for $x \leq n_A, y \leq n_B$ such that $p(i,j|x,y) = \frac{1}{d} Tr(P_{x,i} Q_{y,j})$.
    \item The correlation $p$ has a representation $(H_A, H_B, \{E_{x,i}\}, \{F_{y,j}\}, \phi)$ where $\phi$ is a direct sum of a maximally entangled vector with the zero vector and the operators $E_{x,i}$ and $F_{y,j}$ are projections for all $x,y,i,j$.
\end{enumerate}
\end{lemma}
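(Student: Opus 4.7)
The plan is to run the cycle $(1) \Rightarrow (2) \Rightarrow (3) \Rightarrow (1)$. The headline statement, that every almost maximally entangled correlation is a quantum correlation, then falls out of $(3)$ for free, since the data $(H_A, H_B, \{E_{x,i}\}, \{F_{y,j}\}, \phi)$ produced there is by definition a quantum representation (a unit vector together with PVMs on finite-dimensional spaces).

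For $(1) \Rightarrow (2)$, the argument is a near-verbatim repeat of the first half of the proof of Lemma \ref{canonicalLemma}. Given a representation with $\phi = \frac{1}{\sqrt{d}}\sum_k u_k \otimes v_k$, conjugate the POVMs by the unitaries sending $\{u_k\}$ and $\{v_k\}$ onto the canonical basis of $\mathbb{C}^d$; unitary conjugation preserves positivity and the summation condition, so we still have POVMs. The same entrywise expansion as in that lemma gives $p(i,j|x,y) = \frac{1}{d}Tr(\tilde{P}_{x,i}\tilde{Q}_{y,j}^T)$, and since the transpose of a POVM is again a POVM, replacing $\tilde{Q}_{y,j}$ by $\tilde{Q}_{y,j}^T$ throughout produces the trace formula of $(2)$.

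For $(2) \Rightarrow (3)$ the key tool is Naimark's dilation theorem: every POVM on $\mathbb{C}^d$ with $m$ outcomes arises as the compression of a PVM on a finite-dimensional space $K \supseteq \mathbb{C}^d$. I expect this step to involve the most bookkeeping, because Naimark dilates one POVM at a time whereas we need all of Alice's (respectively Bob's) PVMs to live on a single Hilbert space. The standard fix is to dilate each $\{P_{x,i}\}_i$ individually to a PVM $\{E_{x,i}^{(0)}\}_i$ on $K_A^x \supseteq \mathbb{C}^d$, then pass to $K_A := \mathbb{C}^d \oplus \bigoplus_{x=1}^{n_A}(K_A^x \ominus \mathbb{C}^d)$, and extend the $x$-th PVM to $K_A$ by absorbing the identity on $\bigoplus_{y \neq x}(K_A^y \ominus \mathbb{C}^d)$ into one of its projections (say $E_{x,1}$). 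The resulting operators are still pairwise orthogonal projections summing to $I_{K_A}$, and compressions to $\mathbb{C}^d$ still recover the $P_{x,i}$ since the added piece lives in the orthogonal complement of $\mathbb{C}^d$. Doing the analogous thing on Bob's side (dilating $Q_{y,j}^T$ rather than $Q_{y,j}$ to absorb the transpose) and embedding $\phi_d = \frac{1}{\sqrt{d}}\sum_k e_k \otimes e_k$ into $K_A \otimes K_B$ as $\phi_d \oplus 0$, an entrywise expansion of $\langle E_{x,i} \otimes F_{y,j}(\phi_d \oplus 0), \phi_d \oplus 0\rangle$ in the canonical basis reproduces $\frac{1}{d}Tr(P_{x,i}Q_{y,j})$, giving $(3)$.

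Finally, $(3) \Rightarrow (1)$ is a straightforward compression. Letting $V_A : \mathbb{C}^d \hookrightarrow H_A$ denote the canonical isometry onto the support of the maximally entangled summand of $\phi$, and similarly $V_B$, set $\tilde{P}_{x,i} := V_A^* E_{x,i} V_A$ and $\tilde{Q}_{y,j} := V_B^* F_{y,j} V_B$; these are POVMs on $\mathbb{C}^d$. Since $\phi = (V_A \otimes V_B)\phi_d$, the identity $\langle E_{x,i} \otimes F_{y,j}\phi, \phi\rangle = \langle \tilde{P}_{x,i} \otimes \tilde{Q}_{y,j}\,\phi_d, \phi_d\rangle$ exhibits $p$ as an almost maximally entangled correlation. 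The only recurring subtlety in the proof is keeping track of the transpose appearing in the maximally entangled inner product formula, but once a convention is fixed it does not obstruct any step.
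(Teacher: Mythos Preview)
Your proposal is correct and matches the paper's approach: the paper handles $(1)\Leftrightarrow(2)$ by pointing to the computation in Lemma~\ref{canonicalLemma}, proves $(3)\Rightarrow(1)$ by the same compression argument you describe, and for $(1)\Rightarrow(3)$ simply invokes the Naimark-type dilation from Theorem~5.3 of \cite{MR3460238} rather than spelling out the bookkeeping of combining the per-input dilations into a single ambient space as you do. Your cycle $(1)\Rightarrow(2)\Rightarrow(3)\Rightarrow(1)$ versus the paper's $(1)\Leftrightarrow(2)$ and $(1)\Leftrightarrow(3)$ is a cosmetic difference only.
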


\begin{proof} The equivalence of the first two statements can be showing by repeating the calculation in the proof of Lemma \ref{canonicalLemma}. We show that the first and the last statements are equivalent.

First, assume that the third statement holds. We may assume that \[\phi = \sum_{i=1}^k \frac{1}{\sqrt{k}} e_i \otimes f_i \] for some $k \leq d$ and orthonormal sets of vectors $\{e_1,e_2,\dots,e_k\} \subset H_A$ and $\{f_1, f_2, \dots, f_k \} \subset H_B$. Let $P$ be the projection onto the span of $\{e_1,e_2,\dots,e_k\}$ and let $Q$ be the projection onto the span of $\{f_1, f_2, \dots, f_k \}$. Then \[ \langle E_{x,i} \otimes F_{y,j} \phi, \phi \rangle = \langle (PE_{x,i}P) \otimes (Q F_{y,j} Q) \phi, \phi \rangle. \] The first statement follows by replacing $H_A$ and $H_B$ with the ranges of $P$ and $Q$, respectively.

Next assume that the first statement holds. By the arguments in the proof of Theorem 5.3 of   \cite{MR3460238}, there exist Hilbert spaces $K_A$ and $K_B$ such that $H_A \subseteq K_A$ and $H_B \subseteq K_B$, and projection-valued measures $\{E_{x,i}\}$ on $K_A$ and $\{F_{y,j}\}$ on $K_B$ such that \[ \langle P_{x,i} \otimes Q_{y,j} \phi, \phi \rangle = \langle E_{x,i} \otimes F_{y,j} \phi, \phi \rangle \] where we have identified $\phi$ with $\phi \oplus 0$ in $K_A \otimes K_B$. Since $\phi$ was maximally entangled in $H_A \otimes H_B$, the third statement follows. \end{proof}

Though we do not know whether or not $C_{max}$ is dense in $C'_{max}$ in general, the remaining theorems provide partial answers towards this question. The next proposition is the crux of our argument.

\begin{proposition} \label{CmaxToCmaxPrime} Suppose that $p \in C'_{max}(n_A,n_B,m)$ satisfies $p(i,j|x,y) = \frac{1}{d} Tr(P_{x,i} Q_{y,j})$ for positive operator-valued measures $\{P_{x,i}\}$ and $\{Q_{y,j}\}$ on $\mathbb{C}^d$. Also, suppose that $P_{x,i}P_{x,j} = P_{x,j}P_{x,i}$ and $Q_{y,i}Q_{y,j} = Q_{y,j}Q_{y,i}$ for each $x,y,i,j$. If each operator $P_{x,i}$ and $Q_{y,j}$ has rational eigenvalues, then $p \in C_{max}(n_A,n_B,m)$. \end{proposition}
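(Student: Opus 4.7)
The plan is to write $p$ as a finite rational convex combination of correlations in $C_{max,d}$ and then invoke Theorem~\ref{CmaxThm}. The commutativity hypothesis lets me simultaneously diagonalize each experiment's POVM, and the rational eigenvalues will yield rational weights in the decomposition.

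First I would use the pairwise commutativity of $\{P_{x,i}\}_i$ (for each fixed $x$) together with self-adjointness to pick an orthonormal basis $\{u_k^x\}_{k=1}^d$ of $\mathbb{C}^d$ jointly diagonalizing every $P_{x,i}$, writing $P_{x,i} = \sum_k \lambda_{x,i,k}\,|u_k^x\rangle\langle u_k^x|$ with $\lambda_{x,i,k}$ nonnegative rational and $\sum_i \lambda_{x,i,k} = 1$ for each $k$. Do the analogous thing for each $\{Q_{y,j}\}_j$, obtaining a basis $\{v_\ell^y\}$ and nonnegative rational weights $\mu_{y,j,\ell}$.

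The key step is a ``rational randomized rounding'' of each POVM into a convex combination of PVMs. For a function $\sigma : \{1,\dots,d\} \to \{1,\dots,m\}$, set $P^\sigma_{x,i} := \sum_{k : \sigma(k)=i} |u_k^x\rangle\langle u_k^x|$, which is a projection, and $p^x_\sigma := \prod_{k=1}^d \lambda_{x,\sigma(k),k}$, which is a nonnegative rational. Then $\{P^\sigma_{x,i}\}_i$ is a PVM in $\mathbb{M}_d$, and a direct computation (using $\sum_i \lambda_{x,i,k}=1$ to marginalize over coordinates $k' \neq k$) gives $P_{x,i} = \sum_\sigma p^x_\sigma P^\sigma_{x,i}$ and $\sum_\sigma p^x_\sigma = 1$. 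The analogous decomposition $Q_{y,j} = \sum_\tau q^y_\tau Q^\tau_{y,j}$ is defined for Bob in the same way.

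Now form joint assignments $\Sigma = ((\sigma_x)_x, (\tau_y)_y)$ with product weight $p_\Sigma := \prod_x p^x_{\sigma_x} \prod_y q^y_{\tau_y}$, which is a nonnegative rational summing to $1$ over $\Sigma$. For each $\Sigma$, set $p^\Sigma(i,j|x,y) := \frac{1}{d} Tr(P^{\sigma_x}_{x,i} Q^{\tau_y}_{y,j})$; by Lemma~\ref{canonicalLemma}, $p^\Sigma \in C_{max,d}$. Multilinearity of the trace together with marginalization over the unused coordinates then yields
\[ p(i,j|x,y) = \frac{1}{d} Tr(P_{x,i} Q_{y,j}) = \sum_\Sigma p_\Sigma \, p^\Sigma(i,j|x,y). \]
After dropping zero-weight terms, this is a positive rational convex combination of elements of $C_{max}$, so Theorem~\ref{CmaxThm} delivers $p \in C_{max}$, as required.

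The main obstacle is really just the bookkeeping around independence: the rounded PVMs $\{P^{\sigma_x}_{x,i}\}_i$ depend on a separate assignment $\sigma_x$ for each $x$, and similarly for Bob, so one must package them into a single joint $\Sigma$ in order to obtain a decomposition of $p$ that is genuinely a convex combination of correlations rather than one whose summands depend on $(x,y)$. The rationality hypothesis on the eigenvalues is used precisely to guarantee that each product weight $p_\Sigma$ is rational, so that Theorem~\ref{CmaxThm} applies directly rather than only through its closure.
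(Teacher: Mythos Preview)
Your argument is correct, and it takes a genuinely different route from the paper's. The paper proves the proposition by an iterative dilation: fixing one experiment $z$ at a time, it diagonalizes $\{P_{z,i}\}_i$, finds a common denominator $N$ for the diagonal entries, tensors all operators with $I_N$, and replaces each rational diagonal entry $n_{k,i}/N$ by a rank-$n_{k,i}$ projection in the new $N$-dimensional factor. This yields a genuine PVM $\{E_{z,i}\}_i$ realizing the same correlation, and one repeats for every $x$ and every $y$ until all POVMs have been upgraded to PVMs; the conclusion $p\in C_{max}$ then follows directly from Lemma~\ref{canonicalLemma} without invoking Theorem~\ref{CmaxThm}.

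By contrast, you decompose each commuting POVM as a rational convex combination of PVMs in the \emph{same} dimension (the ``randomized rounding'' over functions $\sigma$), couple these decompositions via a product measure over all experiments, and then hand the resulting rational convex combination of $C_{max,d}$ correlations to Theorem~\ref{CmaxThm}. This is slick in that it reduces the proposition to an already-proved structural fact about $C_{max}$, and it makes transparent why rationality of the eigenvalues is exactly what is needed. The cost is combinatorial: the number of terms in your convex combination is $m^{d(n_A+n_B)}$, and Theorem~\ref{CmaxThm} then produces a representation of potentially enormous dimension, whereas the paper's direct dilation keeps tighter control on the size of the Hilbert space at each step. Both arguments are valid; yours is more conceptual, the paper's more constructive.
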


\begin{proof} Choose some $z \leq n_A$. Since $P_{z,i}P_{z,j} = P_{z,j}P_{z,i}$, the matrices $P_{z,i}$ are simultaneously diagonalizable, so that $P_{z,i} = U^* P'_{z,i} U$ for diagonal $P'_{z,i}$ and a unitary $U$. By conjugating each $P_{x,i}$ and $Q_{y,j}$ by $U$, we obtain new positive operator-valued measures $\{P'_{x,i}\}$ and $\{Q'_{y,j}\}$ which still satisfy the conditions of the proposition and yield the same correlation - in particular, their eigenvalues are unchanged. Choose a common denominator $N$ for the diagonal entries of $P'_{z,1},\dots,P'_{z,m}$, so that each is of the form $\frac{k}{N}$. For each $x \neq z$ and each $y$, define $\tilde{P}_{x,i} := I_{N} \otimes P'_{x,i}$ and $\tilde{Q}_{y,j} := I_N \otimes Q'_{y,j}$. Then the positive operator-valued measures $\{\tilde{P}_{x,i}\}$ and $\{\tilde{Q}_{y,j}\}$ still satisfy the conditions of the proposition with unchanged eigenvalues (ignoring multiplicity). 

We now construct a projection-valued measure $\{E_{z,i}\}$ as follows. Assume the diagonal entries of $P'_{z,i}$ are $\frac{n_{1,i}}{N}, \frac{n_{2,i}}{N}, \dots, \frac{n_{d,i}}{N}$. Since $\sum_i P'_{z,i} = I_d$, we see that $\sum_i n_{k,i} = N$ for each $k \leq d$. Hence, there exist projection-valued measures $\{G_{k,i}\}_{i=1}^m$ on $\mathbb{C}^N$ such that the rank of $G_{k,i}$ is $n_{k,i}$. For each $i \leq m$, set $E_{z,i} = \oplus_{k=1}^d G_{k,i}$. Then $\{E_{z,i}\}$ is a projection-valued measure on $\mathbb{C}^N \otimes \mathbb{C}^d$. Now observe that \begin{eqnarray} \frac{1}{d} Tr(P'_{z,i} Q'_{y,j}) & = & \frac{1}{d} \sum_{k=1}^d \frac{n_{k,i}}{N} Q'_{y,j}(k,k) \nonumber \\ & = & \frac{1}{dN} \sum_{k=1}^d Tr(G_{k,i})Q'_{y,j}(k,k) \nonumber \\ & = & \frac{1}{dN} Tr(E_{z,i} \tilde{Q}_{y,j}) \nonumber \end{eqnarray} where $Q'_{y,j}(k,k)$ is the $k$-th diagonal entry of $Q'_{y,j}$. It is obvious that $\frac{1}{d} Tr(P_{x,i} Q_{y,j}) = \frac{1}{dN} Tr(\tilde{P}_{x,i} \tilde{Q}_{y,j})$ for each $x \neq z$. Hence we may have assumed without loss of generality that $\{P_{z,i}\}_i$ was a projection-valued measure. Repeating this argument for each $\{P_{x,i}\}$, as well as each $\{Q_{y,j}\}$, proves that $p \in C_{max}(n_A,n_B,m)$. \end{proof}

\begin{theorem} \label{CmaxDenseCPrime} Let $p \in C'_{max}(n_A,n_B,m)$ satisfy $p(i,j|x,y) = \frac{1}{d} Tr(P_{x,i} Q_{y,j})$ for positive operator-valued measures $\{P_{x,i}\}$ and $\{Q_{y,j}\}$ on $\mathbb{C}^d$. Suppose that $P_{x,i}P_{x,j} = P_{x,j}P_{x,i}$ and $Q_{y,i}Q_{y,j} = Q_{y,j}Q_{y,i}$ for each $x,y,i,j$. Then for every $\epsilon > 0$, there exists a $q \in C_{max}(n_A,n_B,m)$ such that $|p(i,j|x,y) - q(i,j|x,y)| < \epsilon$. \end{theorem}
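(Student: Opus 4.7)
The plan is to perturb the POVMs $\{P_{x,i}\}$ and $\{Q_{y,j}\}$ slightly so that their eigenvalues become rational, while preserving both the within-measurement commutativity and the POVM normalization. Once accomplished, Proposition \ref{CmaxToCmaxPrime} will immediately furnish a nearby correlation in $C_{max}$.

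For the first step, I would use the commutativity hypothesis to simultaneously diagonalize each measurement. For each $x \leq n_A$, the commuting positive operators $\{P_{x,i}\}_{i=1}^m$ are simultaneously unitarily diagonalizable: there is a unitary $U_x \in \mathbb{M}_d$ such that each $U_x P_{x,i} U_x^*$ is diagonal with entries $\lambda_{x,i,1},\dots,\lambda_{x,i,d} \geq 0$, and the POVM condition forces $\sum_{i=1}^m \lambda_{x,i,k} = 1$ for every $k$. Do the same for $\{Q_{y,j}\}_j$ with unitaries $V_y$ and eigenvalues $\mu_{y,j,k}$.

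For the second step, given $\epsilon > 0$ I would choose a parameter $\delta > 0$ (to be fixed at the end) and round each probability vector $(\lambda_{x,1,k},\dots,\lambda_{x,m,k})$ to a nonnegative rational probability vector $(\tilde\lambda_{x,1,k},\dots,\tilde\lambda_{x,m,k})$ by picking rationals within $\delta/m$ of the original entries for $i < m$ and then setting $\tilde\lambda_{x,m,k} := 1 - \sum_{i<m} \tilde\lambda_{x,i,k}$. For $\delta$ small enough this stays nonnegative, and the per-coordinate error is at most $\delta$. After doing the same for the $\mu$'s, define $\tilde P_{x,i}$ and $\tilde Q_{y,j}$ as the corresponding diagonal matrices conjugated back by $U_x^*$ and $V_y^*$. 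By construction these are POVMs on $\mathbb{C}^d$ with rational eigenvalues, and $\{\tilde P_{x,i}\}_i$, $\{\tilde Q_{y,j}\}_j$ still commute within each measurement because the diagonalizing basis has been kept fixed. Proposition \ref{CmaxToCmaxPrime} then places $q(i,j|x,y) := \frac{1}{d} Tr(\tilde P_{x,i} \tilde Q_{y,j})$ in $C_{max}(n_A, n_B, m)$.

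For the third step, I would verify the approximation bound. Since $\tilde P_{x,i} - P_{x,i}$ is diagonal in the $U_x$-basis with entries bounded by $\delta$, its operator norm is at most $\delta$, and similarly for $\tilde Q_{y,j} - Q_{y,j}$. Since all operators in sight are positive contractions ($\|P_{x,i}\|_\infty,\|Q_{y,j}\|_\infty \leq 1$ as summands of the identity in a POVM), the bound $|Tr(AB)| \leq d\,\|A\|_\infty \|B\|_\infty$ together with the triangle inequality yields
\[
|p(i,j|x,y) - q(i,j|x,y)| \leq \tfrac{1}{d}\bigl(d\delta \cdot 1 + d \cdot 1 \cdot \delta\bigr) = 2\delta,
\]
so taking $\delta < \epsilon/2$ completes the argument. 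There is no serious obstacle, but the key subtlety is arranging the rounding to simultaneously preserve the POVM identity condition and the commuting-within-measurement structure; this is handled by fixing the diagonalizing unitary per measurement and letting one eigenvalue per coordinate absorb the rounding error.
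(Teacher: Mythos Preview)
Your proposal is correct and follows essentially the same route as the paper: simultaneously diagonalize each POVM using the commutativity hypothesis, round the resulting eigenvalue vectors to rational probability vectors, bound the perturbation in the correlation, and invoke Proposition~\ref{CmaxToCmaxPrime}. The only cosmetic differences are that the paper processes the measurements one at a time (splitting the error budget as $\epsilon/2 + \epsilon/2$ between the $P$'s and $Q$'s) and estimates $|Tr((P_{z,i}-\tilde P_{z,i})Q_{y,j})|$ via the diagonal entries directly rather than via your operator-norm inequality, but the substance is identical.
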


\begin{proof} Pick some $z \leq n_A$. Since $P_{z,i}$ commutes with $P_{z,j}$ for each $i,j \leq m$, we can simultaneously diagonalize each $P_{z,i}$, so that $P_{z,i} = U^* P'_{z,i} U$ with $P'_{z,i}$ diagonal. Assume the $k$-th diagonal entry of $P'_{z,i}$ is $\lambda_{k,i}$. Choose positive rational numbers $r_{k,i}$ such that $\sum_i r_{k,i} = 1$ and $|\lambda_{k,i} - r_{k,i}| < \frac{\epsilon}{2}$. Set $\tilde{P}_{z,i} = U^*D_{z,i}U$ where $D$ is diagonal with entries $r_{1,i}, r_{2,i}, \dots, r_{d,i}$. Then for each $y$ and $j$ \begin{eqnarray} |\frac{1}{d} Tr(P_{z,i} Q_{y,j}) - \frac{1}{d} Tr(\tilde{P}_{z,i} Q_{y,j}) | & = & \frac{1}{d} | Tr((P_{z,i} - \tilde{P}_{z,i})Q_{y,j}) | \nonumber \\ &  \leq  &  \frac{1}{d} \sum_{k=1}^d |\lambda_{k,i} - r_{k,i}|  \nonumber \\ & \leq & \frac{1}{d} \sum_{k=1}^d \frac{\epsilon}{2} = \frac{\epsilon}{2}. \nonumber \end{eqnarray} Replacing all the $P_{x,i}$'s with $\tilde{P}_{x,i}$'s in this way gives a new correlation $q'$ in $C'_{max}$ with $|p(i,j|x,y) - q'(i,j|x,y)| < \epsilon/2$, and the $\tilde{P}_{x,i}$ have rational eigenvalues. Similarly, we can replace the $Q_{y,j}$'s with $\tilde{Q}_{y,j}$'s such that each $\tilde{Q}_{y,j}$ has rational eigenvalues and the correlation $q$ given by $q(i,j|x,y) = \frac{1}{d} Tr(\tilde{P}_{x,i} \tilde{Q}_{y,j})$ satisfies $|q'(i,j|x,y) - q(i,j|x,y)| < \epsilon/2$. It follows that $|p(i,j|x,y) - q(i,j|x,y)| < \epsilon$. Finally, we see that $q \in C_{max}(n_A,n_B,m)$ by Proposition \ref{CmaxToCmaxPrime}. \end{proof}

\begin{corollary} The set $C_{max}(n_A,n_B,2)$ is dense in $C'_{max}(n_A,n_B,2)$. \end{corollary}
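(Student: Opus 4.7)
The plan is to deduce this corollary immediately from Theorem \ref{CmaxDenseCPrime} by observing that the commutativity hypothesis in that theorem is automatic when $m = 2$. Specifically, any two-outcome POVM $\{P_{x,1}, P_{x,2}\}$ on $\mathbb{C}^d$ must satisfy $P_{x,1} + P_{x,2} = I_d$, so $P_{x,2} = I_d - P_{x,1}$, and hence $P_{x,1}$ and $P_{x,2}$ commute trivially. The same applies to $\{Q_{y,1}, Q_{y,2}\}$.

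So the proof would proceed as follows. First, given $p \in C'_{max}(n_A,n_B,2)$, I would invoke the equivalence in Lemma \ref{POVMLemma} to write $p(i,j|x,y) = \frac{1}{d}\operatorname{Tr}(P_{x,i}Q_{y,j})$ for some POVMs $\{P_{x,i}\}$ and $\{Q_{y,j}\}$ on $\mathbb{C}^d$ with $m=2$. Second, I would note that, as observed above, $P_{x,i}P_{x,j} = P_{x,j}P_{x,i}$ and $Q_{y,i}Q_{y,j} = Q_{y,j}Q_{y,i}$ hold automatically because each set contains only two operators summing to the identity. Third, given any $\epsilon > 0$, I would apply Theorem \ref{CmaxDenseCPrime} to produce $q \in C_{max}(n_A,n_B,2)$ with $|p(i,j|x,y) - q(i,j|x,y)| < \epsilon$ for all $i,j,x,y$. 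This exhibits $p$ as a limit of correlations in $C_{max}(n_A,n_B,2)$, proving density.

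There is essentially no obstacle; the real content lives in Theorem \ref{CmaxDenseCPrime}, and this corollary just identifies a case in which the commutativity hypothesis is free. One minor cosmetic point worth flagging in the write-up is that the conclusion of Theorem \ref{CmaxDenseCPrime} as stated bounds $|p(i,j|x,y)-q(i,j|x,y)|$ only for fixed indices, but the proof there actually gives a uniform bound over all $i,j,x,y$ (each entrywise replacement contributes at most $\epsilon/2$ independently of the indices), which is exactly what density in the usual $\ell^\infty$ topology on correlations requires. No new estimates or constructions are needed beyond invoking the two earlier results.
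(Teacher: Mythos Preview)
Your proposal is correct and matches the paper's own proof essentially verbatim: both observe that when $m=2$ the relation $P_{x,2}=I_d-P_{x,1}$ forces commutativity of each POVM, so Theorem \ref{CmaxDenseCPrime} applies directly. Your added remark about the uniformity of the $\epsilon$-bound is a fair cosmetic observation but does not alter the argument.
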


\begin{proof} In the case $m=2$, a positive operator-valued measure consists of a pair $P_{x,1},P_{x,2}$ of positive operators satisfying $P_{x,1} + P_{x,2} = I_d$. Since $P_{x,2} = I_d - P_{x,1}$, $P_{x,1}$ commutes with $P_{x,2}$. The statement then follows from Theorem \ref{CmaxDenseCPrime}. \end{proof}

%%%%%%%%%%%%%%%%%%%%%%%%%%%%%%%%%%%%%%%%%%%%%%%%%%%%%%%%%%%%%%%%%%%%

\section*{Acknowledgements}

We would like to thank Christopher Schafhauser for his helpful advice throughout the course of this research. We are also grateful to Vern Paulsen and William Slofstra for interesting conversations and helpful remarks. In particular, we should acknowledge that the questions considered in section 5 were partly inspired by a lunch conversation between the authors and William Slofstra at the Institute for Pure and Applied Mathematics workshop ``Approximation Properties in Operator Algebras and Ergodic Theory" in May of 2018. Finally, we would like to express our gratitude to the referee for many helpful comments which led to significant improvements in the exposition.

%%%%%%%%%%%%%%%%%%%%%%%%%%%%%%%%%%%%%%%%%%%%%%%%%%%%%%%%%%%%%%%%%%%%%%%%%%%%%%%%%%%%%%%%%%%%%%%%%%

\bibliographystyle{alpha}
\bibliography{references}

\end{document}